%% file: main.tex
\documentclass[lettersize,journal]{IEEEtran}

\usepackage{easyReview}
\usepackage{amsmath, amsfonts, bbm}
\usepackage{multirow}  
\usepackage{booktabs}
\usepackage[ruled,vlined]{algorithm2e}

\usepackage{caption}
\usepackage{array}
\usepackage{textcomp}
\usepackage{stfloats}
\usepackage{url}
\usepackage{verbatim}
\usepackage{graphicx}
\usepackage{diagbox, slashbox}
\usepackage{subcaption}
\usepackage{cite}
\usepackage{amsthm}
\usepackage{hyperref}
\usepackage{easyReview}

\newtheorem{defi}{Definition}
\newtheorem{thm}{Theorem}
\newtheorem{lem}{Lemma}

\newtheorem{rmk}{Remark}

\newcommand{\x}{\boldsymbol{x}}
\newcommand{\y}{\boldsymbol{y}}
\newcommand{\z}{\boldsymbol{z}}

\newcommand{\bv}{\boldsymbol{v}}
\newcommand{\rl}{\boldsymbol{\ell}}
\newcommand{\hatl}{\boldsymbol{\hat{\ell}}}
\newcommand{\hatt}{\hat{\theta}}

\newcommand{\ar}{\mathcal{A}_\mathrm{R}}
\newcommand{\aj}{\mathcal{A}_\mathrm{J}}
\newcommand{\dar}{\Delta_{|\mathcal{A}_\mathrm{R}|}}
\newcommand{\mr}{\mathrm{R}}
\newcommand{\mj}{\mathrm{J}}
\newcommand{\U}{\textbf{U}}
\newcommand{\mH}{\mathcal{H}}

\definecolor{orange}{RGB}{255,107,0}

\begin{document}

\title{Learning an Opponent-aware Anti-jamming Strategy via Online Convex Optimization}

\author{
\IEEEauthorblockN{Liangqi Liu, Wenqiang Pu,~\IEEEmembership{Member,~IEEE}, Yingru Li, Zhi-Quan Luo,~\IEEEmembership{Fellow,~IEEE}}


\thanks{Liangqi Liu, Wenqiang Pu, Yingru Li, and Zhi-Quan Luo are with Shenzhen Research Institute of Big Data, The Chinese University of Hong Kong-Shenzhen, Guangdong,
China. Part of this work is presented at 2024 IEEE 13rd Sensor Array and Multichannel Signal Processing Workshop~\cite{liu2024radar}, this journal version provides detailed models, complete theoretical analysis, and additional results.}

}



\maketitle

\begin{abstract}
The dynamic competition against intelligent jammer systems presents a significant challenge to modern radar. Traditional active anti-jamming strategy learning methods often suffer from low sample efficiency and fail to fully exploit the structures of the adversary jammer. To reveal the inherent structure, this paper adopts an Online Convex Optimization (OCO) framework to capture the competition between a frequency agile radar and a digital radio frequency memory (DRFM)-based intelligent jammer. Recognizing that conventional OCO algorithms also suffer from suboptimal sample efficiency, two refined algorithms are developed that incorporate unbiased gradient estimators specifically tailored to the unique characteristics of DRFM-based jammers. Our theoretical analysis of the regret bound indicates significant improvements in long-term performance compared to standard OCO. The simulation results consistently show that our algorithms outperform traditional OCO and reinforcement learning baselines, achieving faster convergence and better anti-jamming performance.



\end{abstract}

\begin{IEEEkeywords}
anti-jamming, online convex optimization, opponent modeling, regret analysis 
\end{IEEEkeywords}

\section{Introduction}

\input{sections/intro}

\section{System Model}\label{sec:system-model}
\input{sections/signal}

\section{OCO-based Radar Strategy Design}\label{sec:oco-base}
\input{sections/oco}

\section{Proposed Algorithms}\label{sec:alg}
\input{sections/algo}


\section{Experiments}\label{sec:expe}
\input{sections/expe}

\section{Conclusion}\label{sec:conclu}
In this work, the competition between a subpulse-level frequency-agile radar and DRFM-based main-lobe intelligent jammer is modeled as an online convex optimization (OCO) problem. Within this OCO framework, useful information about various jamming strategies is naturally embedded in the gradients of the cost function. By careful modeling intelligent jammers, we develop two novel algorithms that outperform conventional OCO benchmarks. Sub-linear static and universal regret bounds are provided, and numerical simulations demonstrate a significant enhancement in sample efficiency.

\section*{Appendix}
\input{sections/appendix}


\bibliographystyle{IEEEtran}
\bibliography{ref}

\end{document}

%% file: sections/intro.tex
\IEEEPARstart{M}{odern} electronic warfare (EW) presents increasingly sophisticated challenges for radar systems, particularly from intelligent jammers~\cite{adamy2001ew,de2018introduction}. Among these, main lobe jamming represents a critical threat, wherein jammers strategically position themselves within the radar's main beam. While traditional passive signal-processing methods~\cite{spezio2002electronic, maksimov1979radar,neri2001anti} have been developed to counter such jamming, their effectiveness remains limited by underlying assumptions that often fail in real-world EW scenarios. For instance, blind source separation techniques~\cite{geMainlobeJammingSuppression2018} become ineffective when the angular separation between radar and jammer decreases.


To address the limitations of passive methods, active strategy designs have been introduced as an alternative, enabling radar systems to dynamically adjust their transmission parameters. One prominent approach is exploiting the frequency agility~\cite{axelssonAnalysisRandomStep2007a, zhouruixueCoherentSignalProcessing2015, bicaGeneralizedMulticarrierRadar2016}, where the radar unpredictably varies its carrier frequency across a wide spectrum. By formulating the carrier frequency selection problem as a sequential decision-making process, reinforcement learning (RL) methods have gained significant attention. Notable works include a deep RL-based strategy design for active antagonism~\cite{liRadarActiveAntagonism2021}, the implementation of Deep Q-Network (DQN) with detection probability as a performance metric~\cite{liDeepQNetworkBased2019}, and the application of dueling double DQN  for airborne radar anti-jamming waveform design~\cite{zhengAirborneRadarAntiJamming2022}.

In addition to RL methods, game-theoretic approaches have also been extensively explored for designing anti-jamming strategies. For instance, two-person zero-sum games have been employed to model radar-jammer competition~\cite{songMIMORadarJammer2012}, where strategies are defined based on the information available to each player. Similarly, Stackelberg equilibrium approaches have been applied to radar-target interactions in cluttered environments, as shown in~\cite{lanMIMORadarTarget2015}, where hierarchical decision structures are utilized to model dominance between radar and jammer. Recently, extensive-form games with imperfect information have been investigated through neural fictitious self-play (NFSP)~\cite{liNeuralFictitiousSelfPlay2022} and counterfactual regret minimization~\cite{li2022counterfactual}, offering significant reductions in computational complexity for multi-stage decision-making scenarios. Beyond frequency agility, other practical aspects of anti-jamming have also been investigated in recent years. For example, authors in~\cite{gengRadarJammerIntelligent2023} explore scenarios where the jammer dynamically allocates power, deriving a Nash equilibrium (NE) using DQN-based algorithms. A Stackelberg game is constructed in~\cite{fengRadarJammingStrategy2023} by incorporating recognition time into the utility function, while authors in~\cite{ailiyaAdaptationFrequencyHopping2022} proposes an RL-based framework to adaptively adjust frequency hopping interval and pulse width.

Despite advancements in both RL-based and game-based studies, significant challenges remain. Many of these methods suffer from low sample efficiency, making them computationally expensive and impractical for real-time operations. Additionally, some works~\cite{zhangNewSchemeTarget2022, hanTwodimensionalAntijammingCommunication2017, ailiyaAdaptationFrequencyHopping2022} rely on unrealistic assumptions about the prior knowledge of jamming types or behaviors, limiting their applicability in dynamic and unpredictable electronic warfare scenarios. These limitations highlight a naive, yet inadequately addressed question: \textit{what is the essential structure that governs the learning of an effective anti-jamming strategy?} While a seemingly intuitive answer might attribute this to the intelligence of the jammer, such an explanation is overly abstract and lacks insights.

In this work, we revisit the anti-jamming strategy learning problem through the lens of Online Convex Optimization (OCO)~\cite{zinkevich2003online, shalev2012online, hazan2007logarithmic}, which offers a rigorous and elegant framework to study this specific problem. This perspective enables us to identify a critical structural insight: useful information about the jammer's behavior can be inherently encoded in the gradient of the cost function. By leveraging this insight, we demonstrate that online interaction samples can be systematically utilized to effectively learn the anti-jamming strategies. This structured approach not only provides theoretical clarity but also introduces a practical mechanism for incorporating prior knowledge of the jammer, which significantly enhances the real-world applicability of the learning process. The main contributions of this paper are summarized as follows:


\begin{itemize}
    \item \textbf{OCO framework}: The radar's anti-jamming strategy is formulated within an Online Convex Optimization (OCO) framework, embedding key characteristics of the jammer, such as history-dependent behavior, into the gradient estimation of the cost functions. This enables an adaptive and efficient online strategy that dynamically responds to jamming threats.

    \item \textbf{Two OCO algorithms}: For scenarios where no prior knowledge of the jammer's behavior is available, an action modeling-based algorithm is proposed, leveraging the jammer's action space to ensure a sublinear regret bound, independent of the jammer's strategy. For jammers influenced by interaction history, an opponent modeling-based algorithm is introduced, capturing the jammer's decision-making patterns to develop a more adaptive and efficient anti-jamming strategy.

    \item \textbf{Regret analysis}: Theoretical regret bounds are derived for both static and universal regret metrics. The action modeling-based algorithm achieves a static regret bound of $\Tilde{\mathcal{O}}(\sqrt{N})$, aligning with the OCO framework's lower bound, where $N$ is the total number of rounds. The opponent modeling-based algorithm achieves a sublinear universal regret bound of $\Tilde{\mathcal{O}}(\sqrt{N|\mH|})$, where $\mH$ represents the recent history space, significantly improving sample efficiency for history-dependent jamming strategies.
\end{itemize}

The remainder of this paper is organized as follows. Section~\ref{sec:system-model} introduces the system model and problem formulation, and Section~\ref{sec:oco-base} elaborates on the proposed OCO framework. Section~\ref{sec:alg} presents the two proposed algorithms, along with their theoretical regret analysis. Section~\ref{sec:expe} provides numerical experiments to validate the proposed methods and finally Section~\ref{sec:conclu} concludes this work.

\noindent\textbf{Notations:} We use script symbols such as \( \mathcal{A} \) to denote a finite set, with \( |\mathcal{A}| \) representing its cardinality. Boldface symbols like \( \x \) denote vectors, and for a vector \( \x \) defined over \( \mathcal{A} \), which has a dimension of \( |\mathcal{A}| \), we use \( \x(a) \) to represent the scalar value corresponding to the element \( a \in \mathcal{A} \). The \( k \)-dimensional probability simplex is represented by \( \Delta_k \) and the notation \( [N] \) refers to the set \( \{1, 2, \ldots, N\} \). Other notations are defined as they appear in the text.

%% file: sections/signal.tex
\subsection{Signal Models}\label{sec:sig-model}
Consider a subpulse-level frequency-agile (FA) radar system operating in monopulse transmission mode, as illustrated in Fig.~\ref{fig:sig-01}. In this system, each pulse contains $M$ subpulses, with each subpulse independently selecting its carrier frequency from a predefined set $\mathcal{F} = f_c+\{f_1, f_2, \ldots, f_L\}$, where $f_c$ is the central frequency. The frequencies within $\mathcal{F}$ are uniformly spaced with a constant step size $\Delta f$, such that each frequency $f_i$ for $i = 2, 3, \ldots, L$ is defined as $f_i=f_{i-1}+\Delta f$. The waveform of the $n$-th pulse at time $t$, denoted by $s_n(t)$, is formulated as:
\begin{equation}\label{equ:signal_radar}
s_n(t) = \sum_{m=0}^{M-1}\text{rect}\left(\frac{t-mT_c}{T_c}\right)a(t)\exp{(j2\pi f^{\mr}_{n,m} t)},
\end{equation}
where $T_c$ is the duration time of each subpulse, $f^{\mr}_{n,m}\in\mathcal{F}$ represents the sub-carrier frequency for subpulse $m$ of the $n$-th pulse, $a(t)$ is the complex envelope of baseband signal, and $\text{rect}(t)$ is the rectangle function defined by $\text{rect}(t)=1$ if $0\leq t\leq 1$ and $\text{rect}(t)=0$ otherwise.
\begin{figure}
    \centering
    \includegraphics[width=0.95\linewidth]{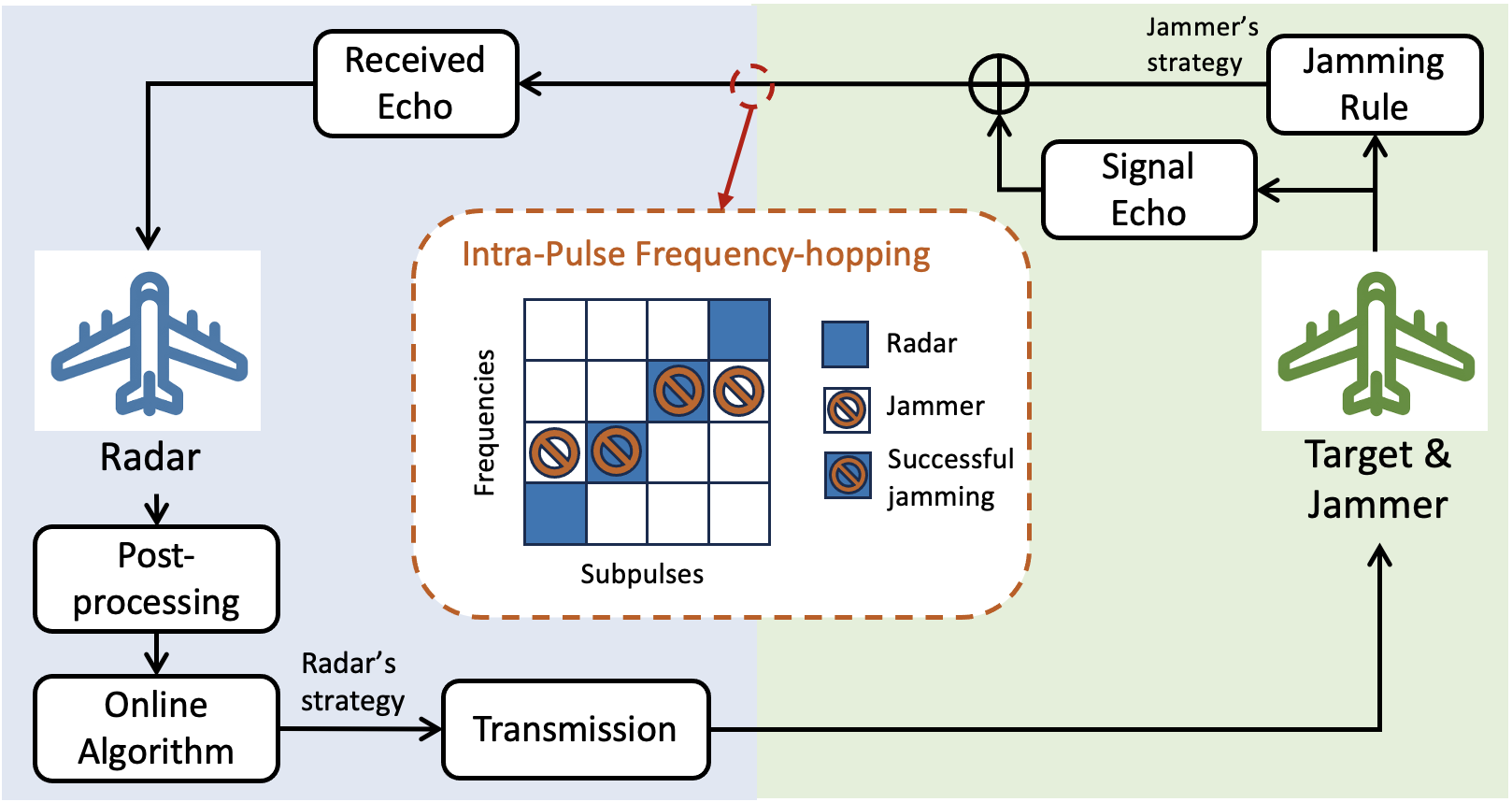}
    \caption{Illustration of the anti-jamming scenario. (The radar system implements sub-pulse frequency-hopping, while the jammer chooses a narrow noise jamming.)}
    \label{fig:sig-01}
\end{figure}

The subpulse-level frequency agile has been employed in various advanced radar systems, as discussed in recent literature~\cite{ailiyaAdaptationFrequencyHopping2022, liRadarActiveAntagonism2021}. Compared to traditional pulse-to-pulse frequency-agile radar systems~\cite{axelssonAnalysisRandomStep2007a}, this technique demonstrates superior capability to counteract complex jamming types by dynamically varying the sub-carrier frequencies within each pulse. Furthermore, intrapulse frequency hopping is gaining increased significance because modern jammer systems can detect a radar's carrier frequency within only tens of nanoseconds~\cite{nguyenRealtimeProtocolawareReactive2014}—a duration markedly shorter than the typical radar pulse width. The sub-pulse frequency hopping technique provides a critical advantage in electronic warfare environments.

Given the rapid response capabilities of the jammer system, the noise-modulated jamming signal is considered to be transmitted 'immediately' after receiving the radar pulse. Assume a distance $R$ between the radar and the target/jammer, the transmitted jamming signal is mathematically described as:
\begin{equation}\label{equ:signal_jammer}
j_n(t) =\sum_{m=0}^{M-1}\text{rect}\left(\frac{t-T_d - mT_c}{T_c} \right)b(t)\exp{(j2\pi f^\mj_{n,m} t)},
\end{equation}
where $T_d=R/c$ (with $c$ representing the speed of light) denotes the propagation time from the radar to the target/jammer, $b(t)$ represents the noise-modulated signal envelope, and $f^\mj_{n,m}$ indicates the carrier frequency for sub-pulse $m$ at pulse $n$.

Due to the subpulse-level frequency agility of the radar system, $f^\mj_{n,m}$ is considered to vary within pulse $n$ to ensure effective jamming. Each $f^\mj_{n,m}$ is selected based on jammer's employed strategy. Note that the jamming signal model in~\eqref{equ:signal_jammer} covers a range of practical jamming types, including traditional suppression jamming types such as Noise Frequency Modulation Jamming (NFMJ)~\cite{neng1995survey}, and novel jamming techniques based on DRFM devices, like Interrupted Sampling Repeater Jamming (ISRJ)~\cite{wang2007mathematic} and Smart Noise Jamming (SNJ)~\cite{zhang2022radar}.
    

At the radar receiver, the incoming signal consists of both the reflected radar pulse echo, as described in E.q.~\eqref{equ:signal_radar}, and the jamming signal from E.q.~\eqref{equ:signal_jammer}. The received signal $r_n(t)$ is given by
\begin{equation}\label{equ:signal_echo}
r_n(t) = \alpha s_n(t-T_d) + j_n(t - T_d) + w_n(t) 
\end{equation}
where $\alpha\in\mathbb{C}$ is the signal attenuation factor and $w_n(t)$ is the zero-mean Gaussian noise.

Fig.~\ref{fig:spectrogram} presents a spectrogram example for the frequency-hopping scenario described in Fig.~\ref{fig:sig-01}, visualizing both the radar signal $s_n(t)$ and the received  signal $r_n(t)$. In the spectrogram, the straight regions represent the radar's signals, whereas the irregular regions indicate jamming noise. The red ellipsoids highlight the two sub-pulses that have been successfully jammed. A Signal-to-Interference-plus-Noise Ratio (SINR) of $1.65$dB is calculated from the echo spectrum using the post-processing procedure described in the following section.
\begin{figure}
    \centering
    \includegraphics[width=0.99\linewidth]{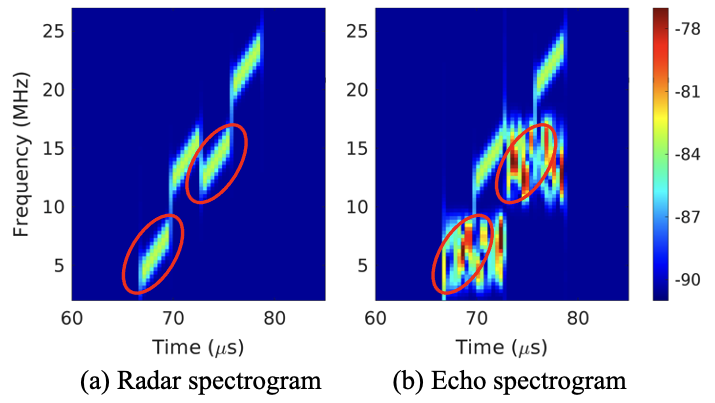}
    \caption{Spectrogram example of the transmit and echo signal. The subpulse frequency hopping follows the sequence $\{6,14,14,22\}$ MHz, while the jammer follows $\{6,6,14,14\}$ MHz. Detailed settings are presented in Table~\ref{tab:para}. }
    \label{fig:spectrogram}
\end{figure}

\subsection{Post-processing of the Received Signal}\label{sec:post-processing}
The received signal $r_n(t)$ requires post-processing to extract valuable information and identify the presence of any jamming signals and here we omit pulse index $n$ for presentation simplicity. This processing involves bandpass filtering to isolate the desired frequency band and matched filtering to enhance the signal of interest within the received signal, as illustrated in Fig.~\ref{fig:sig-02}. Processing details are provided in Appendix~\ref{appdix:postprocessing}.

After the post-processing, the frequency information $f^\mj_m$ can be extracted and the SINR which offers a direct measure of the quality for target detection is calculated for each subpulse, 
\begin{equation}\label{equ:sinr}
\text{SINR}(f^\mr_m, f^\mj_m\mid\theta) = \frac{P_\mr}{P_{0}+P_\mj\mathbbm{1}(f^\mr_m=f^\mj_m)},\ \forall m\in[M]
\end{equation}
where $\theta=[P_\mr, P_\mj, P_{0}]$ represents the received the radar signal pwoer $P_\mr$, jammer signal power $P_\mj$, and noise power $P_{0}$, respectively. The indicator function $\mathbbm{1}(f^\mr_m=f^\mj_m)$ determines whether the received signal in sub-pulse $m$ is jammed, i.e., when the frequencies of radar and jammer coincide.

Notice that $\theta$ is unknown in advance and can be estimated iteratively through post-processing. In Section~\ref{sec:oco-base}, we consider an ideal scenario where $\theta$ is assumed to be known and this provides a detailed illustration of the proposed OCO framework, with SINR serves as a component of the cost function. Subsequently, in Section~\ref{sec:alg}, the uncertainty of $\theta$ is addressed within the proposed algorithms, further enhancing the framework's applicability in practical scenarios.
\begin{figure}
    \centering
    \includegraphics[width=0.99\linewidth]{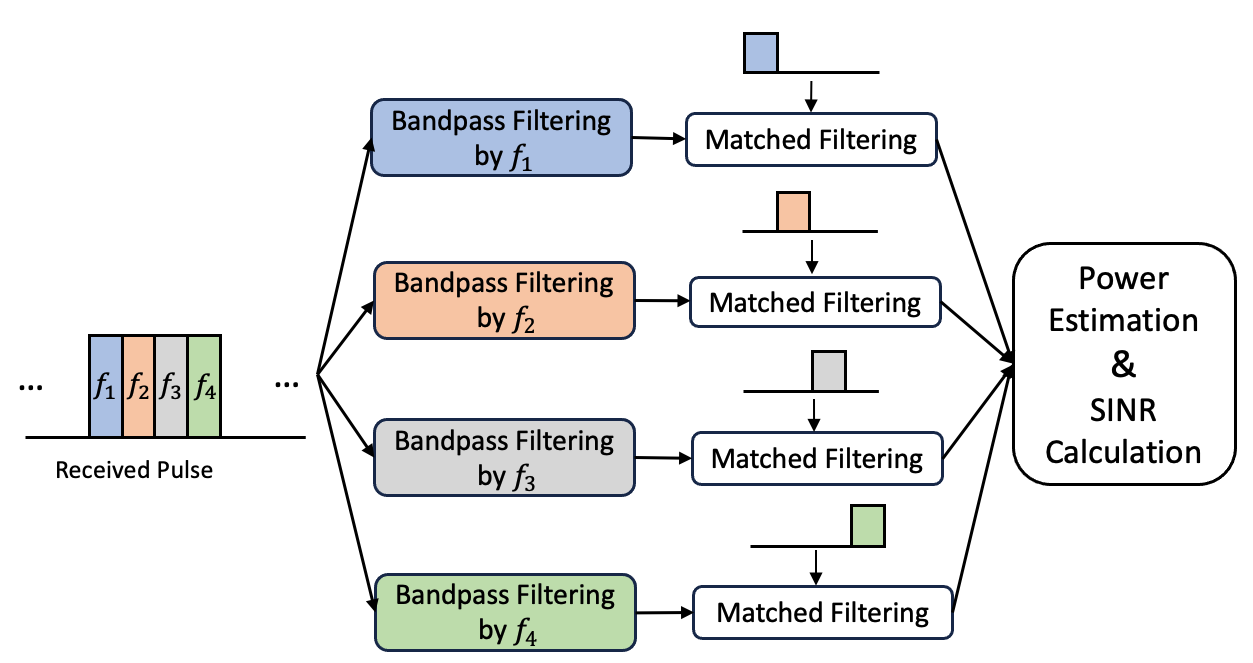}
    \caption{Post-processing for the received signal.}
    \label{fig:sig-02}
\end{figure}

%% file: sections/oco.tex
\subsection{Online Optimization Formulation}
The received signal model in E.q.~\eqref{equ:signal_echo} is an aggregation of both radar's and jammer's signals, which can be regarded as a one-round interaction between the radar and jammer. Protocol~\ref{alg:protocol} illustrates this iterative interaction, where the radar system must adaptively change its transmitted signal to improve the corresponding SINR of the received signal. To mathematically represent this online sequential decision-making procedure, we introduce an online convex optimization (OCO) framework in this section. Below, basic components of this framework for pulse $n$ are introduced.

{
    \renewcommand{\algorithmcfname}{Protocol} 
    \begin{algorithm}
    \SetAlgoCaptionSeparator{.}  
    \caption{Iterative anti-jamming procedure}
    \label{alg:protocol}
    \SetAlgoLined
    \For{$n \gets 1$ \KwTo $N$}{
        Radar transmits the signal $s_n(t)$\;
        Jammer detects $s_n(t)$ and transmits $j_n(t)$\;
        Radar receives the echo $r_n(t)$.
    }
    \end{algorithm}
}

\begin{itemize}
\item \textbf{Radar's Action Set} \(\ar\): The radar's action set is defined as \(\ar = \mathcal{F}^M\), where each element 
$$a_n = (f^\mr_{n,1}, f^\mr_{n,2}, \ldots, f^\mr_{n,M}) \in \ar$$ 
represents the radar's selection of \(M\) sub-carrier frequencies for the \(n\)-th pulse. This set encapsulates all possible frequency combinations available to the radar.

\item \textbf{Radar's Strategy} \(\x_n\): The radar's strategy, from which the action \(a_n\) is drawn, lies within the probability simplex \(\dar\). This ensures that \(\x_n\) represents a valid probabilistic distribution over the action set \(\ar\), allowing for randomized decision-making.

\item \textbf{Cost Function} \(f_n\): The cost function \(f_n\) is a convex, linear function defined as 
$$f_n(\x) = \langle \rl_n, \x \rangle : \dar \rightarrow \mathbb{R}.$$
Here, \(\rl_n \in \mathbb{R}^{|\ar|}\) denotes the cost vector associated with the action set \(\ar\), which quantifies the performance or penalty of each action in the given scenario.
\end{itemize}

Note that the cost vector $\rl_n$ is derived from the SINR value described in E.q.~\eqref{equ:sinr}, given as
\begin{equation}\label{equ:ln}
    \rl_n(a_n)=\frac{c-\overline{\text{SINR}}_n(a_n)}{c}\in[0,1], \forall a\in\ar,
\end{equation}
where action $a_n\in\ar$ represents a specific combination of sub-carrier frequencies. The term 
\begin{equation*}
\overline{\text{SINR}}_n(a_n)=\frac{1}{M}\sum_{m=1}^M \min\{\text{SINR}(f^\mr_{n,m},f^\mj_{n,m}\mid\theta), c\}
\end{equation*}
denotes the average SINR for action $a_n$, depending on the frequencies of jamming signal at pulse $n$. The constant $c$ is a SINR threshold that indicates a sufficiently high probability of detection.
\begin{rmk}[Limited feedback]\label{rmk:ln(an)}
Since the received signal $r_n(t)$ corresponds to radar's specific action $a_n$, only the value $\rl_n(a_n)$ can be evaluated according to E.q.~\eqref{equ:ln}. Other elements in $\rl_n$ cannot be directly inferred from $r_n(t)$, and we refer to such a single-value return as limited feedback.
\end{rmk}

The primary goal of the radar is to minimize the cost $f_n(a_n)$ in each round $n$, formally expressed as
\begin{equation}\label{equ:obj}
\begin{aligned}
\min_{\x_n} \quad & f_n(\x_n) \\
\text{s.t.} \quad & \x_n\in\dar.
\end{aligned}
\end{equation}

Determining the optimal strategy $\x^*_n$ for each round $n$ is challenging due to the dynamic and unpredictable nature of the jamming environment, where the function \(f_n(\cdot)\) is not known in advance. Consequently, in each round $n$, it becomes practical to utilize available historical information, such as $a_1, f_1(a_1),\ldots,a_{n-1}, f_{n-1}(a_{n-1})$, to determine $\x_n$. This sequential decision-making process is known as online optimization. After $N$ rounds, the performance of an online algorithm \(\mathcal{A}\) is retrospectively measured by the static regret (see Definition~\ref{def:static}), which compares the cumulated cost incurred by $\mathcal{A}$ to that of an optimal strategy
\begin{equation}\label{equ:x^*}
    \x^* = \arg\min_{\x\in\dar}\sum_{n=1}^N f_n(\x).
\end{equation}

\begin{defi}[Static regret]\label{def:static}
The static regret of an online algorithm \(\mathcal{A}\) is defined as 
\begin{equation}\label{equ:sta-regret}
     \text{S-Regret}_N(\mathcal{A})=\mathbb{E}\left[\sum_{n=1}^Nf_n(\x_n) - \sum_{n=1}^N f_n(\x^*)\right]
\end{equation}
where the expectation is taken over the randomness in the functions $f_n,\ n=1,2,\ldots,N$.
\end{defi}
Static regret is a widely used performance metric for an online algorithm, as it accounts for the non-stationary nature of the environments, i.e., the time-varying cost functions \(f_n(\cdot)\). The lower bound of the static regret is known to be \(\Omega(\sqrt{N})\) when $f_n$ is convex~\cite{hazanIntroductionOnlineConvex2023}.


\subsection{Online Mirror Descent}
One popular algorithm that achieves a regret bound \(\mathcal{O}(\sqrt{N})\) is the online gradient descent (OGD). This method iteratively updates the strategy by taking a gradient step,
\begin{equation*}
    \x_{n+1}=\text{Proj}_{\dar}[\x_n-\eta\nabla f_n(\x_n)]
\end{equation*}
where $\text{Proj}_{\dar}[\cdot]$ denotes the projection onto the feasible set $\dar$ and $\eta$ represents the learning rate. While this projection is typically performed with respect to the Euclidean distance, a more general approach is to use the Bregman divergence (Definition~\ref{def:bergman}), leading to what is known as online mirror descent (OMD)\cite{hazan2010extracting}. Compared with OGD, OMD also guarantees $\mathcal{O}(\sqrt{N})$ regret bound. Furthermore, using Bregman divergence for projection enhances efficiency and aligns closely with the geometry of the feasible set, thereby promoting stability and potentially achieving tighter regret bounds~\cite{hazanIntroductionOnlineConvex2023}. Below, we briefly review OMD method. 
\begin{defi}[Bregman divergence]\label{def:bergman}
Let $F: \mathbb{R}^{|\ar|}\rightarrow \mathbb{R}$ be a Legendre function~\cite{lattimoreBanditAlgorithms2020} defined on the domain $\mathcal{D}$. The Bregman divergence from the point $\x$ to $\boldsymbol{u}$ w.r.t. $F$ is defined as:
\begin{equation*}
    D_F(\x,\boldsymbol{u}) := F(\boldsymbol{u})-F(\x)-\langle\nabla F(\x), \boldsymbol{u}-\x\rangle
\end{equation*}
\end{defi}


Given a specific Legendre function $F$, the OMD update typically involves two steps:
the regularized gradient descent within the domain $\mathcal{D}$ of Bregman divergence $F(\cdot)$, and then mapping back to the primal domain $\dar$. Specifically,
\begin{equation}
\begin{cases}
\boldsymbol{u}_{n+1}&=\arg\min_{\x\in\mathcal{D}}\ \eta\langle\x,\nabla f_n(\x_n)\rangle+D_F(\x, \x_n), \\
\x_{n+1}&=\arg\min_{\x\in\dar}\ D_F(\x, \boldsymbol{u}_{n+1}).
\end{cases}
\end{equation}
One common choice of $F(\x)$ is the negative entropy function $F(\x)=\sum_a\x(a)\ln{\x(a)}$, which nicely aligns with the geometry of probability simplex, and thus the update rule of OMD becomes:
\begin{equation}\label{equ:omd}
\begin{cases}
\boldsymbol{u}_{n+1}(a)&=\x_{n}(a)\cdot e^{-\eta_n\rl_{n}(a)},\forall a,\\
\x_{n+1}&=\frac{\boldsymbol{u}_{n+1}}{\Vert\boldsymbol{u}_{n+1}\Vert_1},
\end{cases}
\end{equation}
where $\Vert\cdot\Vert_1$ denotes the $\ell_1$ norm operation and $\boldsymbol{u}_{n+1}(a)$ represents an unnormalized dual variable for the $a$-th component at iteration $n+1$. It can be observed that OMD in \eqref{equ:omd} requires gradient $\rl_n$ to be known, which is not applicable to our setting since only $\rl_n(a_n)$ is available to the radar (Remark~\ref{rmk:ln(an)}). In the following section, we introduce gradient estimators.


\subsection{Unbiased Gradient Estimator}
Although we cannot obtain $\rl_n$ explicitly, it is possible to construct an estimate $\hatl_n$ that satisfies $\mathbb{E}[\hatl_n]= \rl_n$. Lemma~\ref{lem:grad-est} shows the regret bound of an OCO algorithm when substituting $\hatl_n$ for $\rl_n$ of a linear function.
The proof for this lemma is presented in Appendix~\ref{proof:lem-grad}.
\begin{lem}[Regret bound with $\hatl_n$]\label{lem:grad-est}
Let $f_1,\ldots, f_N: \dar\rightarrow \mathbb{R}$ be a sequence of linear functions with gradients $\rl_1,\ldots,\rl_N$. For an OCO algorithm $\mathcal{A}$, define the updated points $\x_n$ as $\x_1\leftarrow\mathcal{A}(\emptyset),\x_n\leftarrow\mathcal{A}(\hatl_1,\ldots,\hatl_{n-1})$, where each $\hatl_n$
is an unbiased gradient estimator satisfying $\mathbb{E}[\hatl_n]=\rl_n$. Then the static regret defined in Definition~\ref{def:static} can be expressed in terms of the estimators $\hatl_n, \forall n\in[N]$ as
\begin{equation*}
\begin{aligned}
    \text{S-Regret}_N(\mathcal{A})= \mathbb{E}\left[ \sum_{n=1}^N \langle\hatl_n,\x_n\rangle - \sum_{n=1}^N \langle\hatl_n,\x^*\rangle \right]
\end{aligned}
\end{equation*}
where $\x^*$ is the optimal static strategy defined in E.q.~\eqref{equ:x^*}.
\end{lem}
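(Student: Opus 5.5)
The plan is to reduce the claim to a term-by-term identity. By linearity of expectation and linearity of each $f_n$, it suffices to show that for every $n\in[N]$,
\begin{equation*}
\mathbb{E}\left[\langle\rl_n,\x_n\rangle\right]=\mathbb{E}\left[\langle\hatl_n,\x_n\rangle\right]
\quad\text{and}\quad
\mathbb{E}\left[\langle\rl_n,\x^*\rangle\right]=\mathbb{E}\left[\langle\hatl_n,\x^*\rangle\right].
\end{equation*}
Summing these identities over $n$ and subtracting then reproduces Definition~\ref{def:static} with $f_n(\x)=\langle\rl_n,\x\rangle$ replaced by $\langle\hatl_n,\x\rangle$, which is exactly the expression in the lemma.

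For the first identity, I would use the tower property with respect to the history $\mathcal{F}_{n-1}$. Let $\mathcal{F}_{n-1}$ denote the sigma-algebra generated by $\hatl_1,\ldots,\hatl_{n-1}$, together with any randomness of the environment up to round $n-1$. Since $\x_n=\mathcal{A}(\hatl_1,\ldots,\hatl_{n-1})$, the iterate $\x_n$ is $\mathcal{F}_{n-1}$-measurable. This gives
\begin{equation*}
\mathbb{E}[\langle\hatl_n,\x_n\rangle]=\mathbb{E}\big[\langle\mathbb{E}[\hatl_n\mid\mathcal{F}_{n-1}],\x_n\rangle\big]=\mathbb{E}[\langle\rl_n,\x_n\rangle].
\end{equation*}
This step requires interpreting the unbiasedness hypothesis $\mathbb{E}[\hatl_n]=\rl_n$ conditionally, that is, as $\mathbb{E}[\hatl_n\mid\mathcal{F}_{n-1}]=\rl_n$. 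That is the natural reading, because $\hatl_n$ is built from the action $a_n\sim\x_n$ drawn fresh at round $n$, while $\rl_n$ is fixed given the past; I would state this interpretation explicitly. Boundedness ($\rl_n\in[0,1]^{|\ar|}$, finite action set) ensures all expectations are finite, so the manipulations are legitimate.

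For the second identity, the comparator $\x^*$ in E.q.~\eqref{equ:x^*} depends only on $\rl_1,\ldots,\rl_N$, not on the estimators. In the oblivious setting (the $\rl_n$ are fixed in advance, or at least independent of the radar's sampling), $\x^*$ is deterministic, or measurable with respect to the environment's randomness alone. In that case $\mathbb{E}[\langle\hatl_n,\x^*\rangle]=\langle\mathbb{E}[\hatl_n],\x^*\rangle=\langle\rl_n,\x^*\rangle$ follows directly.

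I expect the main obstacle to be this comparator term when the jammer is adaptive. There $\rl_n$, and hence $\x^*$, may depend on past radar actions, so $\x^*$ is not independent of $\hatl_n$. My first approach would be to treat $\x^*$ as fixed conditional on the realized sequence $\rl_1,\ldots,\rl_N$, viewed as the environment's randomness per the definition's expectation over $f_n$, and then apply conditional unbiasedness given that sequence. If that conditioning is too strong, I would fall back to stating the lemma for an arbitrary fixed comparator $\x\in\dar$ and taking $\x^*$ as the oblivious case.
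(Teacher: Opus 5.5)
Your argument is the paper's proof made rigorous: the paper writes $\hatl_n=\rl_n+\Delta\rl_n$ and drops the error term $\mathbb{E}\bigl[\sum_{n}\langle\Delta\rl_n,\x_n-\x^*\rangle\bigr]$ simply by citing $\mathbb{E}[\Delta\rl_n]=0$, which silently assumes the conditional unbiasedness given the past and the non-random (oblivious) comparator that you state explicitly. Your caution about adaptive jammers is warranted and your first fix would not work---if some later $\rl_m=\U(\theta)\pi(h_m)$ depends on $b_n$, then $\mathbb{E}[\hatl_n\mid\rl_1,\ldots,\rl_N]\neq\rl_n$ in general and the identity can genuinely fail for the hindsight $\x^*$---so your fallback to a fixed comparator (more generally, any comparator that is $\mathcal{F}_{n-1}$-measurable at round $n$, by the same tower argument) is the right form of the lemma.
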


Lemma~\ref{lem:grad-est} shows an equivalence when replacing the gradient with an unbiased estimator. However, if the estimator is biased, the lemma does not hold since a cumulative error term is introduced (c.f. Appendix~\ref{proof:lem-grad}). In online convex optimization literature, a traditional choice is the importance-weighted gradient estimator (IWE),
\begin{equation} \label{equ:iwe}
\hatl_{n}(a)=
\begin{cases}
\frac{\rl_{n}(a)}{\x_{n}(a)}, &a=a_n \\
0, & \text{Otherwise}.
\end{cases}
\end{equation}

It is easy to verify that IWE is an unbiased estimator. By replacing the gradient $\rl_n$ in E.q.~\eqref{equ:omd} with IWE, we obtain an algorithm called OMD-IWE. This algorithm is equivalent to the classical adversarial bandit algorithm known as Exp3~\cite{auerNonstochasticMultiarmedBandit2002a}. Using a constant learning rate $\eta=\sqrt{\log{|\ar|}/N|\ar|}$, OMD-IWE guarantees a regret bound of $\mathcal{O}(\sqrt{N|\ar|\log{|\ar|}})$.


%% file: sections/algo.tex
\subsection{Action Modeling}
The static regret of OME-IWE does not meet the lower bound $\Omega(\sqrt{N})$ mentioned in Definition.\ref{def:static}. The additional square-root dependency on $|\ar|$ arises due to the limited gradient estimation in each round. Note that the jammer's action space is a finite set, which provides useful information to enhance the OMD-IWE by developing an improved gradient estimator. Below, necessary notations for modeling jammer's action is introduced.
\begin{itemize}
    \item \textbf{Jammer's action set} $\mathcal{A}_\mathrm{J}$: Denote $b_n$ as the jammer action at round $n$ with $b_n\in\mathcal{A}_\mathrm{J}=\mathcal{F}^M\cup\mathcal{B}$, where $\mathcal{B}$ includes specific jamming actions such as the observe action adopted by interrupted sampling repeater jammer~\cite{wang2007mathematic}.
\item \textbf{Jammer's  strategy} $\y_n$: $b_n$ is sampled from $\y_n\in\Delta_{|\mathcal{A}_\mathrm{J}|}$, where $\Delta_{|\mathcal{A}_\mathrm{J}|}$ denotes the probability simplex.
\end{itemize}
With the help of notations $\mathcal{A}_\mathrm{J}$ and $\y_n$, function $f_n$ can be reformulated as the cost of a two-player matrix game.
\begin{defi}[Cost function]\label{def:phi}
Let $\mathbf{U}(\theta)\in\mathbb{R}^{|\mathcal{A}_R|\times|\mathcal{A}_J|}$ denote the cost matrix associated with each pair of actions $a\in\ar$ and $b\in\aj$. Specifically, each entry $\mathbf{U}[a,b\mid\theta]$ is defined as 
\begin{equation*}
    \mathbf{U}[a,b\mid\theta]=\frac{c-\overline{\text{SINR}}(a,b\mid\theta)}{c}\in[0,1],\forall a\in\ar,b\in\aj,
\end{equation*}
where 
$$\overline{\text{SINR}}(a,b\mid\theta)=\frac{1}{M}\sum_{m=1}^M\min\{\text{SINR}_m(f^\mr_m,f^\mj_m\mid\theta),c\},$$ and $\theta$ represents the scenario parameters defined in E.q.~\eqref{equ:sinr}. Then the function $\phi(\x,\y):\dar\times\Delta_{|\mathcal{A}_\mathrm{J}|}\rightarrow\mathbb{R}$ is defined to represent the cost related to both radar's and jammer's strategies:
\begin{equation}\label{equ:phi}
   \phi(\x_n, \y_n) = \x_n^T\U(\theta)\y_n
\end{equation}
The relationship between $\phi$ and $f_n$ is
\begin{equation*}
    \phi(\x_n, \y_n)=\mathbb{E}_{b_n\sim\y_n}[f_n(\x_n)].
\end{equation*}
\end{defi}
Notice that the jammer's action $b_n$ is implicitly implied by $f_n$ and for simplicity in notation, we keep the use of $f_n$. Based on the cost function $\phi$, the gradient $\rl_n$ with respect to $\x_n$ is given by
\begin{equation}\label{equ:grad-phi}
    \rl_n=\nabla_{\x_n} \phi(\x_n,\y_n)=\mathbf{U}(\theta)\y_n.
\end{equation}
\begin{rmk}[Estimation of $\theta$]\label{rmk:theta}
The scenario parameter $\theta$ can be estimated well within a few rounds by proper power spectrum estimation methods~\cite{bingham1967modern, boashash2015time, levin1965power}. Given that post-processing effectively isolates signal components, we consider that power spectrum estimation yields an unbiased estimate $\hatt$, i.e., $\mathbb{E}[\hatt]=\theta$. Consequently, it follows that $\mathbb{E}[\U[a,b]\mid \hatt]=\U[a,b]$. Furthermore, this work does not explore various power spectrum estimation techniques, as such an analysis is beyond its scope. Instead, we just assume that $\hatt$ is known with $\text{Var}\left[\U[a,b]\mid\hatt\right]\leq\sigma^2, \forall a,b,$ where $\sigma$ is a small constant.
\end{rmk}
Given the estimated scenario parameter $\hatt$, the cost matrix $\U(\hatt)$ provides a model-based mapping from any action pair $(a,b)$ to the corresponding normalized cost. Building on $\U(\hatt)$ and the jammer's action set $\aj$, an improved gradient estimator $\hatl_n$ is developed, enabling a more efficient update of anti-jamming strategies.
\begin{defi}[Action modeling based gradient estimator]\label{def:ame}
Let $\hatt$ be the estimated scenario parameters with $\mathbb{E}[\hatt]=\theta$, an unbiased gradient estimator $\hatl_n$ can be obtained as:
\begin{equation}\label{equ:ame}
    \hatl_n =\U{[:,b_n\mid\hatt]}
\end{equation}
where $\hatl_n$ is unbiased since $\mathbb{E}[\hatl_n]=\mathbb{E}[\U(\hatt)]\y_n=\rl_n.$
\end{defi}
The proposed \textbf{Action Modeling based Gradient Estimator} (AME) exploits jammer's action information $b_n$ to directly construct the entire cost vector $\U[:,b_n\mid\hatt]\in\mathbb{R}^{|\mathcal{A}_R|}$,
whose $a$-th component quantifies the cost that \emph{would be incurred} if the radar had played action $a$ against the observed jammer action $b_n$. Using this vector as $\hatl_n$ effectively turns the feedback from a single scalar observation into a full-information update over all radar actions, enabling OMD to downweight all unfavorable actions simultaneously.

By integrating the proposed \textbf{Action Modeling based Gradient Estimator} (AME) $\hatl_n$ into the OMD update~(10), we obtain the enhanced algorithm OMD-AME, summarized in Algorithm~\ref{alg:omd-ame}. Specifically, in each round, Algorithm~\ref{alg:omd-ame} alternates between (i) computing the gradient estimate $\hatl_n$ from the observed information $b_n$ via Definition~\ref{def:ame} and (ii) subsequently applying the standard OMD step driven by $\hatl_n$ to generate the strategy for the next round.


\setcounter{algocf}{0}

\begin{algorithm}
\caption{OMD with Action Modeling Estimator (OMD-AME)} \label{alg:omd-ame}
\SetAlgoLined
\KwIn{learning rate $\eta$, radar's initial strategy $\x_1$, cost matrix $\U(\hatt)$;}

\For{$n \gets 1$ \KwTo $N-1$}{
    Transmit signal based on action $a_n \sim \x_n$\;  
    Receive signal and extract jammer's action $b_n$\;   
    \textbf{Gradient Estimation}: $\hatl_n = \U{[:,b_n\mid\hatt]}$;\\
    \textbf{Online Mirror Descent}:
    \begin{equation*}
    \begin{cases}
    \text{Update: }&\boldsymbol{u}_{n+1}(a)=\x_{n}(a)e^{-\eta\hatl_{n}(a)},\forall a\in\ar;\\
    \text{Project: }&\x_{n+1}=\frac{\boldsymbol{u}_{n+1}}{\lVert\boldsymbol{u}_{n+1}\rVert_1}.
    \end{cases}  
    \end{equation*}
}
\end{algorithm}

Theorem~\ref{thm:omd-ame} gives the regret bound of OME-AME with detailed proof presented in Appendix~\ref{proof:thm1}. Compared to OMD-IWE (Exp3) in E.q.~\eqref{equ:iwe}, the proposed OMD-AME eliminates the square-root dependency on the action set $\ar$ due to the utilization of the inherent \textbf{matrix game structure within the cost function}. Additionally, it matches the lower bound $\Omega(\sqrt{N})$ of static regret mentioned in Definition~\ref{def:static}.
\begin{thm}[Regret bound for OMD-AME]\label{thm:omd-ame}
Denote $N$ as the total number of rounds, then OMD-AME algorithm guarantees the following regret bound:
\begin{equation*}\label{equ:ame-bound}
    \text{S-Regret}_N(\text{OMD-AME})\leq 2\sqrt{(\sigma^2+1)N\log{|\ar|}}
\end{equation*}
\end{thm}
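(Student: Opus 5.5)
Our plan has three steps: reduce the static regret to a regret against the estimated losses using Lemma~\ref{lem:grad-est}, apply the standard exponential-weights (Hedge) analysis, and then choose $\eta$ to balance the resulting terms.

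\textbf{Reduction.} The AME estimator $\hatl_n=\U[:,b_n\mid\hatt]$ of Definition~\ref{def:ame} satisfies $\mathbb{E}[\hatl_n]=\rl_n$. Moreover, $\x_n$ is determined by $\hatl_1,\ldots,\hatl_{n-1}$, so it is independent of the fresh randomness in $\hatl_n$, conditionally on the past. Lemma~\ref{lem:grad-est} therefore gives
\[
\text{S-Regret}_N(\text{OMD-AME})=\mathbb{E}\Big[\sum_{n=1}^N\langle\hatl_n,\x_n-\x^*\rangle\Big].
\]
It then suffices to bound the regret of the negative-entropy OMD update \eqref{equ:omd} run with constant step $\eta$ on the loss sequence $\hatl_n$, pathwise, against the fixed comparator $\x^*$.

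\textbf{Exponential-weights analysis.} We follow the potential argument with $W_n=\sum_a \exp(-\eta\sum_{k<n}\hatl_k(a))$ and $\x_1$ uniform, which gives $\log(W_{N+1}/W_1)\ge -\eta\sum_n\langle\hatl_n,\x^*\rangle-\log|\ar|$. For this lower bound we use that $\x^*$ can be taken to be a vertex of the simplex, since the objective in \eqref{equ:x^*} is linear; alternatively, the mirror-descent bound $D_F(\x^*,\x_1)\le\log|\ar|$ gives the same conclusion. For the upper bound we use $\log(W_{n+1}/W_n)=\log\sum_a\x_n(a)e^{-\eta\hatl_n(a)}$ together with $e^{-z}\le 1-z+z^2$. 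The latter holds for $z\ge -1$, and in particular for nonnegative losses. This yields
\[
\sum_n\langle\hatl_n,\x_n-\x^*\rangle\le\frac{\log|\ar|}{\eta}+\eta\sum_n\sum_a\x_n(a)\hatl_n(a)^2 .
\]

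\textbf{Second moment and tuning.} We take expectations and bound $\mathbb{E}[\hatl_n(a)^2]=\text{Var}[\U[a,b_n]\mid\hatt]+(\mathbb{E}\,\U[a,b_n])^2\le\sigma^2+1$. Here we use Remark~\ref{rmk:theta} for the variance and the fact that the true entries lie in $[0,1]$. Since $\x_n$ is a probability vector, this gives
\[
\text{S-Regret}_N\le\frac{\log|\ar|}{\eta}+\eta(\sigma^2+1)N .
\]
Choosing $\eta=\sqrt{\log|\ar|/((\sigma^2+1)N)}$ produces exactly the claimed $2\sqrt{(\sigma^2+1)N\log|\ar|}$.

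\textbf{Main obstacle.} The main difficulty is justifying the quadratic inequality $e^{-z}\le1-z+z^2$ when the estimated entries $\U[a,b\mid\hatt]$ may fall outside $[0,1]$. We would first argue that the normalized SINR-based entries remain in $[0,1]$ by construction: the clipping $\min\{\cdot,c\}$ applies to any $\hatt$ with nonnegative powers. This keeps $\hatl_n\ge0$, so the inequality applies. If that argument fails, we would instead use the local-norm bound of OMD, requiring $\eta\hatl_n(a)\ge-1$, which holds for small $\eta$ and bounded estimates.

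A second subtlety is the conditioning in the variance step. The bound $\mathbb{E}[\hatl_n(a)^2]\le\sigma^2+1$ must hold conditionally on the history, so that it can be combined with $\x_n(a)$. This follows because $\hatt$ is estimated independently of the round-$n$ randomness and the bound on the second moment is uniform in $b$.
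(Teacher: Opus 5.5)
Your proposal is correct and is essentially the paper's proof: the reduction through Lemma~\ref{lem:grad-est}, the exponential-weights potential bound that the paper isolates as Lemma~\ref{lem:hedge} (using the same inequalities $e^{-z}\le 1-z+z^2$ and $1+x\le e^{x}$), the second-moment bound $\mathbb{E}[\hatl_n(a)^2]\le\sigma^2+1$, and the same choice $\eta=\sqrt{\log|\ar|/((\sigma^2+1)N)}$. Your added care tightens points the paper glosses over: you get nonnegativity of $\U(\hatt)$ from the clipping so that $e^{-z}\le 1-z+z^2$ applies, and you use conditional rather than unconditional independence of $\x_n$ and $\hatl_n$. Note, however, that the conditional unbiasedness needed in your reduction step, like the paper's independence claim, tacitly requires $\hatt$ to be re-estimated freshly each round rather than fixed once as the input to Algorithm~\ref{alg:omd-ame}.
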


\noindent\textbf{Proof sketch.} The regret analysis builds on the standard OMD framework, while adding adjustments for the bandit setting. The proof proceeds two steps: (i) the regret is first upper-bounded via the usual OMD decomposition, with the corresponding bound further extended by replacing the unknown true gradient with the proposed AME estimator; (ii) the unbiasedness and variance control of AME are then leveraged to linearize and bound the resulting estimation terms, leading to the final regret guarantee.
\begin{rmk}
    Compared to OMD-IWE, OMD-AME eliminates the $\sqrt{|\ar|}$ factor by effectively utilizing information about the jammer's actions and the cost structure. This approach enhances the efficiency of anti-jamming strategy learning and significantly improves sample efficiency.
\end{rmk}

A simulation example comparing OMD-IWE and OMD-AME is presented in Fig.~\ref{fig:alg-01}. In this figure, stationary refers to the scenario where the jammer's strategy $\y_n$ remains constant across all rounds $n$, while non-stationary indicates that $\y_n$ changes over time. Detailed settings are described in Section~\ref{sec:exp-setting}. In a stationary environment, OMD-AME reaches a higher SINR more quickly than OMD-IWE, highlighting its improved sample efficiency. However, in a non-stationary environment, neither OMD-IWE nor OMD-AME achieves high SINR. This is not surprising since the inherent \textbf{structure within the jammer's strategy} $\y_n$ has not been exploited. Next, more prior information on the decision-making behavior of the jammer will be utilized to enhance OMD-AME.
\begin{figure}
    \centering
    \includegraphics[width=0.95\linewidth]{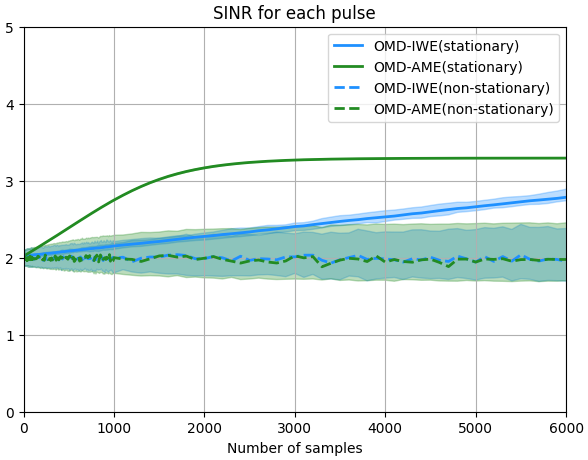}
    \caption{SINR comparison for OMD-IWE and OME-AME in stationary and non-stationary environments.}
    \label{fig:alg-01}
\end{figure}

\subsection{Opponent Modeling}\label{subsec:opponent}
Modern jamming system usually contains a crucial subsystem known as Digital Radio Frequency Memory (DRFM)~\cite{roome1990digital, berger2003digital}, which stores useful interaction histories that jammers can utilize to make informed decisions. In the following section, we focus on dealing with such a jammer by assuming that the jammer's strategy $\y_n$ follows a pre-defined \textbf{fixed decision rule} based on a length-$k$ history: 
$$h_n = \{a_{n-1}, b_{n-1}, \ldots, a_{n-k}, b_{n-k}\}\in\mathcal{H}=\mathcal{A}_\mathrm{R}^k\times\mathcal{A}_\mathrm{J}^k.$$
Under this setting, jamming strategy $\y_n$ is re-expressed as 
$$\y_n =\pi\left(h_n\right),$$
where $\pi:\mathcal{H} \rightarrow\Delta_{|\mathcal{A}_\mathrm{J}|}$ is defined as a \textbf{fixed} mapping from history space to the jammer's action space.

The history-dependent nature of the jammer’s behavior provides an additional exploitable structure beyond the realized action in a single round. Motivated by this, a new estimator termed the \textbf{Opponent Modeling based Gradient Estimator} (OME) is introduced, which infers $\y_n$ from a length-$k$ interaction history and maps it to the gradient through the modeled cost matrix. The detailed definition is shown below.
\begin{defi}[Opponent modeling gradient estimator]\label{def:ome}
Let $\hatt$ represent the estimated scenario parameters and $\hat{\pi}_n$ be an unbiased estimation of $\pi$ at round $n$. An unbiased gradient estimator, conditioned on the length-$k$ history $h_n$, is defined as:
\begin{equation}\label{equ:ome}
    \hatl_n = \U(\hatt)\hat{\pi}_n(h_n)
\end{equation}
\end{defi}
In this definition, the proposed gradient estimator $\hatl_n$ integrates estimations from both $\U(\theta)$ and $\pi$.
\paragraph{\textbf{Estimation of} $\U(\theta)$} As described in Remark~\ref{rmk:theta}, $\theta$ can be estimated as $\hatt$ with bounded variance.
\paragraph{\textbf{Estimation of} $\pi$} Maximum likelihood estimator (MLE) is employed to estimate $\pi$. At round $n$, MLE calculates the likelihood of each action $b \in \aj$, conditioned on the given $h$, across the entire interaction history $I_n=(a_1, b_1, \ldots, a_{n-1}, b_{n-1})$. Specifically, denote $h_i$ as the length-$k$ history in $I_{n}$ starting from round $i$, $h_i = \{ a_i,b_i,\ldots, a_{i+k-1}, b_{i+k-1} \}\subseteq I_n$, then $\forall b\in\aj$, if $h$ exists in history $I_n$,
\begin{equation}\label{equ:mle}
\begin{aligned}
    \hat{\pi}_n(b\mid h) = \frac{1}{\sum_{i=1}^{n-k-1}\mathbbm{1}(h_i=h)}\sum_{i=1}^{n-k-1} \mathbbm{1}(h_i=h,b_{i+k}=b)
\end{aligned}
\end{equation}
Otherwise, if $h$ does not exist in the history $I_n$, $\hat{\pi}_n(\cdot\mid h)$ follows uniform distribution. Notation $\mathbbm{1}(h_i=h)$ denotes the indicator function which equals to one when $h_i=h$ and is zero otherwise. 
The estimation of $\pi$ is unbiased, as $\mathbb{E}[\hat{\pi}_n(h_n)\mid I_n]=\pi(h_n)=\y_n$. Moreover, since $\hatt$ and $\hat{\pi}_n$ are independent, the proposed OME in E.q.~\eqref{equ:ome} is also unbiased.

Different from AME (Definition~\ref{def:ame}), which utilizes only the realized jammer action $b_n$ in the current round, OME yields a gradient estimator that predicts the gradient of the cost function ($\nabla_{\x} \phi(\x,\y)=\U\y$) under the history-dependent assumption.  Consequently, OME enables more accurate decision-making. 

By incorporating the proposed \textbf{Opponent Modeling Estimator} (OME) into the OMD framework described in E.q.(10), we obtain the OMD-OME algorithm (described in Algorithm~\ref{alg:omd-ome}. Specifically, in each round, Algorithm~\ref{alg:omd-ome} alternates between (i) observing the current length-$k$ history $h_n$ and determining its corresponding index $i_n=\mathbf{idx}(h_n)$, (ii) computing the opponent-modeling estimate $\hat{\pi}(H_{i_n})$ and forming the gradient estimate $\hat{l}_n$ via Definition 5, and (iii) applying the standard OMD update driven by $\hat{l}_n$ to update the associated strategy $\x^{(i_n)}$ for the next time the same history pattern is encountered.

As shown in Theorem~\ref{thm:omd-ome} below, the OMD-OME algorithm achieves sublinear universal regret, with a bound dependent on the size of the history space $\mH$ and the number of rounds $N$. Compared to the previously considered static regret, the universal regret defined in Eq.~\eqref{equ:u-regret} is a more rigorous performance metric, and achieving a sublinear bound on it represents a significant advancement in anti-jamming strategy design. This improvement offers an efficient mechanism to keep pace with the rule-based jammer. Further details are provided in the subsequent section.

\begin{algorithm}
\SetAlgoLined
\SetKwInOut{Input}{Input}
\SetKw{KwInit}{Initialize}
\SetKwFunction{FStrategyUpdate}{StrategyUpdate}
\caption{OMD with Opponent Modeling Estimator (OMD-OME)} \label{alg:omd-ome} 
\KwIn{ 
     learning rate $\eta$,  cost matrix $\U(\hatt)$, radar's action set $\ar$, length-$k$ history space $\mathcal{H}=\{H_i\}_{i=1}^{|\mathcal{H}|}$, initial history $I_0$, index mapping $\mathbf{idx}:\mathcal{H}\to[|\mathcal{H}|]$}
\KwInit{$: $}

    $\x^{(i)}\gets \frac{1}{|\ar|}\mathbf{1}, \forall i\in [|\mH|]$\tcp*{\footnotesize $\x^{(i)}$ is radar's strategy used with history $H_i$.}
    $C_i \gets 0, \forall i\in [|\mH|]$\tcp*{\footnotesize $C_i$ counts how many times $H_i$ has been observed so far.}
    $I\gets I_0$\tcp*{\footnotesize Entire history.}
    
\For{$n \gets 1$ \KwTo $N-1$}{
    Observe current length-$k$ history $h_n\in H$ from $I$\;
    $i_n\gets\mathbf{idx}(h_n)$\;
    Transmit signal based on action $a_n \sim \x^{(i_n)}$\;  
    Receive signal and extract jammer's action $b_n$\;
    Update entire history $I \gets I \cup \{b_n\}$\;
    
    $\x^{(i_n)}\gets$  \FStrategyUpdate{$h_n$, $\x^{(i_n)}$, $I$, $C_i$}\;
    $C_{i_n} \gets C_{i_n} + 1$\;

}
\hrulefill

\SetKwProg{Fn}{Function}{:}{}
\Fn{\FStrategyUpdate{$H$, $\x$, $C$, $I$}}{
\tcp{$C$ is the number of previous updates for history $H$}

        \textbf{Update}: $\hat{\pi}(H) \gets$ E.q.(16) with $I$\;
        
        \textbf{Gradient Estimation}:
            $\hatl= \mathbf{U}(\hatt) \hat{\pi}(H)$\;
        
         \For{$a\in\ar$}{
        $\boldsymbol{u}(a) \gets \x(a)\exp\!\big(-\eta\,\hatl(a)\big)$\;
    }
    $\x \gets \boldsymbol{u}/\|\boldsymbol{u}\|_1$\;
    \Return $\x$\;
}
\end{algorithm}

\subsection{Universal Regret Bound for OMD-OME}


The static regret defined in Definition~\ref{def:static} is not an appropriate theoretical performance metric for OMD-OME, as it compares $\x_n$ to a fixed optimal strategy $\x^*$, thereby implicitly assuming that a single fixed decision is effective across all rounds. To address this limitation, we adopt a more general metric called universal regret, which enables more flexible evaluations by utilizing a comparator sequence $\mathcal{Z}_N = (\z_1, \ldots, \z_N)$ that can change over time.
\begin{defi}[Universal regret] The universal regret of an online algorithm \(\mathcal{A}\) is defined as
\begin{equation}\label{equ:u-regret}
    \text{U-Regret}_N(\mathcal{A})=\mathbb{E}\left[\sum_{n=1}^Nf_n(\x_n) - \sum_{n=1}^N f_n(\z_n)\right],
\end{equation}
where $\z_n\in\mathcal{Z}_N,\forall n\in[N]$, and the expectation is taken over the randomness of the cost functions.
\end{defi}
The advantage of universal regret is its flexibility in selecting the comparator sequence $\mathcal{Z}_N$. For example, by setting $\z_1=\ldots=\z_N=\arg\min_{\x\in\dar} \sum_{n=1}^N f_n(\x)$, universal regret reduces to static regret, comparing against the best fixed strategy in hindsight. Alternatively, the optimal sequence $\mathcal{X}^*_N$ can be employed as the comparator, defined as
\begin{equation*}\label{equ:def-x^*}
    \mathcal{Z}_N = \mathcal{X}^*_N=(\x^*_1,\ldots,\x^*_N), \text{where } \x^*_n = \arg\min_{\x}f_n(\x),\forall n
\end{equation*}
Using $\mathcal{X}^*_N$ as the comparator aligns universal regret with the objective function defined in E.q.~\eqref{equ:obj}. Hence, achieving sublinear universal regret with respect to $\mathcal{X}^*_N$ indicates that the algorithm’s cumulative performance remains comparable to the per-round optimal solutions, thereby highlighting the strength of universal regret in dynamic environments.




Regarding the anti-jamming task, an important feature is that the jammer system also has a finite action set and its own strategy $\y_n$. By substituting $f_n(\x)$ with the two-player cost function $\phi(\x,\y)$ defined in E.q.~\eqref{equ:phi}, the universal regret w.r.t. $\mathcal{X}^*_N$ becomes
\begin{equation*}
    \text{U-Regret}_N(\mathcal{A})= \sum_{n=1}^N \phi(\x_n, \y_n) - \sum_{n=1}^N \phi(\x^*_n, \y_n)
\end{equation*}
This formulation highlights the influence of the jammer's strategy $\y_n$ on the universal regret. Furthermore, the estimation of $\y_n$ has been included in the proposed OME as defined in E.q.~\eqref{equ:ome}. Below, the universal regret bound of OMD-OME is given and a detailed proof is provided in Appendix~\ref{thm-proof:omd-ome}.


\begin{thm}[Regret bound for OMD-OME]\label{thm:omd-ome}
Let $N$ represent the total number of rounds, then the OMD-OME algorithm ensures a universal regret bound against the optimal comparator sequence $\mathcal{X}^*_N=(\x^*_1,\ldots,\x^*_N)$:
\begin{equation}\label{equ:ome-bound}
    \text{U-Regret}_N(\text{OMD-OME})\leq 2\sqrt{(\sigma^2+1) N|\mH|\log{|\mathcal{A}_R|}}
\end{equation}
where $|\mH|$ is the size of the history space.
\end{thm}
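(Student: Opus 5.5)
The plan is to decompose the universal regret by history pattern. Once each pattern is fixed, the comparator becomes static. Then I apply the static analysis of Theorem~\ref{thm:omd-ame} separately to each of the $|\mH|$ sub-sequences and combine the results with Cauchy--Schwarz.

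\textbf{Decomposition.} Since $\y_n=\pi(h_n)$ and $\pi$ is a fixed map, the gradient $\rl_n=\U(\theta)\pi(h_n)$ depends only on the history index $i_n=\mathbf{idx}(h_n)$. Write $\rl^{(i)}=\U(\theta)\pi(H_i)$. Then every round $n$ with $i_n=i$ has the same linear cost $\langle \rl^{(i)},\cdot\rangle$. The per-round optimizer $\x^*_n=\arg\min_{\x}\phi(\x,\y_n)$ therefore coincides with a single vertex $\x^{*(i)}$ for all such rounds. Let $N_i=\sum_n \mathbbm{1}(i_n=i)$, so that $\sum_i N_i=N$. The universal regret then splits as
\begin{equation*}
\text{U-Regret}_N=\sum_{i=1}^{|\mH|}\mathbb{E}\Big[\sum_{n:\,i_n=i}\langle \rl^{(i)},\x^{(i)}_n-\x^{*(i)}\rangle\Big],
\end{equation*}
where $\x^{(i)}_n$ is the strategy stored for $H_i$ at its visits. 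Algorithm~\ref{alg:omd-ome} updates $\x^{(i)}$ only at visits to $H_i$, so each inner sum is exactly the static regret of an independent OMD (exponential weights) instance run on the subsequence of visits, with a fixed comparator.

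\textbf{Per-history bound.} On each subsequence, I would replace $\rl^{(i)}$ by $\hatl_n=\U(\hatt)\hat\pi_n(h_n)$ using Lemma~\ref{lem:grad-est}. This uses the stated unbiasedness $\mathbb{E}[\hatl_n\mid I_n]=\rl_n$ and the fact that $\x^{(i)}_n$ is measurable with respect to the past. I then invoke the standard exponential-weights inequality
\begin{equation*}
\sum_{n:\,i_n=i}\langle\hatl_n,\x^{(i)}_n-\x^{*(i)}\rangle\le \frac{\log|\ar|}{\eta}+\frac{\eta}{2}\sum_{n:\,i_n=i}\sum_a\x^{(i)}_n(a)\hatl_n(a)^2 .
\end{equation*}
This is exactly the step used in the proof of Theorem~\ref{thm:omd-ame}. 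For the second moment, the entries of $\hatl_n$ are convex combinations of entries $\U[a,b\mid\hatt]$, since $\hat\pi_n$ is a distribution. Jensen plus Remark~\ref{rmk:theta} then gives $\mathbb{E}[\hatl_n(a)^2]\le \max_b\mathbb{E}[\U[a,b\mid\hatt]^2]\le\sigma^2+1$. Hence the expected inner regret is at most $\log|\ar|/\eta+\eta(\sigma^2+1)N_i/2$.

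\textbf{Combining.} Summing over $i$ gives $|\mH|\log|\ar|/\eta+\eta(\sigma^2+1)N/2$. Choosing $\eta=\sqrt{2|\mH|\log|\ar|/((\sigma^2+1)N)}$ yields $\sqrt{2(\sigma^2+1)N|\mH|\log|\ar|}\le 2\sqrt{(\sigma^2+1)N|\mH|\log|\ar|}$. Alternatively, per-history tuning with $N_i$, followed by Cauchy--Schwarz $\sum_i\sqrt{N_i}\le\sqrt{|\mH|N}$, gives the same result. Either route matches \eqref{equ:ome-bound}.

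\textbf{Main obstacle.} The delicate step is justifying Lemma~\ref{lem:grad-est} on each subsequence. The visit times $\{n: i_n=i\}$ and the counts $N_i$ are random and depend on the radar's own past actions, since $h_n$ contains $a_{n-1},\dots$. The estimator $\hat\pi_n$ is also built from the same history. I would handle this by working with the filtration generated by $I_n$ and noting that $\mathbbm{1}(i_n=i)$ is $I_n$-measurable. Each term then satisfies $\mathbb{E}[\mathbbm{1}(i_n=i)\langle\hatl_n-\rl_n,\x^{(i)}_n-\x^{*(i)}\rangle]=0$ by the tower rule, taking the paper's claimed conditional unbiasedness of $\hat\pi_n$ as given. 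Finally, the fixed-$\eta$ choice avoids needing the random $N_i$ in the learning rate.
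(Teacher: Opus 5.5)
Your proposal is correct and follows the paper's proof essentially step for step: group rounds by history $H_i$ so that each group faces the fixed cost $\U(\theta)\pi(H_i)$ and a fixed comparator, apply Lemma~\ref{lem:grad-est} and the exponential-weights inequality of Lemma~\ref{lem:hedge} on each subsequence, bound the second moment by $\sigma^2+1$, and sum with one fixed $\eta$ to get $\eta(\sigma^2+1)N+|\mH|\log|\ar|/\eta$. The differences are only refinements: the $\eta/2$ form of the Hedge inequality (valid because $\hatl_n\ge 0$) improves the constant to $\sqrt{2}$; your Jensen step (using the paper's independence of $\hat{\pi}_n$ and $\hatt$) gives $\max_b\mathbb{E}[\U[a,b\mid\hatt]^2]$ directly, whereas the paper bounds $\mathbb{E}[\|\U[a,:\mid\hatt]\|_\infty^2]$, which Eq.~\eqref{equ:Usquare} does not strictly control; and your filtration argument makes explicit the conditional unbiasedness needed on the random visit subsequences, though, like the paper, you still assume it rather than prove it.
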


\noindent\textbf{Proof sketch.} The proof of the universal regret bound for OMD-OME is built on the OMD-AME analysis, but introduces two additional ideas: (i) partitioning the interaction into history-dependent subproblems, and (ii) fixing an optimal strategy for each subproblem. The first idea introduces history-based decomposition, while allowing us to bound the aggregation of per-history regret bounds. The second idea treats each history-dependent subproblem as a static-regret problem, enabling us to define a per-round optimal strategy $\x^*_n$ that in total supports a sublinear universal regret guarantee.

Incorporating prior knowledge about the jammer, i.e., finite action space $\mathcal{B}$ and jamming strategy generation rule $\pi$, into the anti-jamming strategy updates is essential for achieving a sub-linear bound on the more challenging universal regret. This approach allows for finding the optimal strategy in each round $n$, thereby significantly enhancing sample efficiency and improving performance in non-stationary environments.

%% file: sections/expe.tex
In this section, we evaluate the performance of the proposed OMD-AME and OMD-OME algorithms by simulations.
\subsection{Simulation Settings}\label{sec:exp-setting}
\noindent\textbf{Simulation Environment:} To ensure a high degree of realism, a signal-level simulation environment was developed using MATLAB and its associated toolboxes. This environment simulates the transmitted radar signal, spatial propagation effects, jamming interference, and target reflections, with the primary objective of closely mirroring practical operational scenarios. The proposed algorithms are evaluated within this environment to closely approximate real-world conditions. Detailed system parameters are provided in Table~1, which contains key radar, jammer, and propagation parameters (e.g., antenna gains, transmit powers, carrier frequency, available sub-frequencies, and distances). In each round, the radar and jammer transmit waveforms that propagate according to a standard free-space path-loss model. Furthermore, all simulation source code has been made publicly available on GitHub\footnote{\href{https://github.com/ptponway/radar_platform}{Source code for the anti-jamming platform.}}.

A set of simulated single-round examples within the environment is illustrated in Fig.~\ref{fig:4spectrum}, showcasing spectrograms of the radar alongside three different jammer types. The SINR values depicted in these figures are computed using the post-processing method described in Appendix~\ref{appdix:postprocessing}. Furthermore, as defined in E.q.~\eqref{equ:sinr}, the SINR values obtained in each round are utilized by the algorithms to update their respective anti-jamming strategies. Regarding jammer modeling, three different types of jamming rules are incorporated into the simulation environment, each of which is detailed in the following subsections.
\begin{figure}
    \centering
    \includegraphics[width=0.99\linewidth]{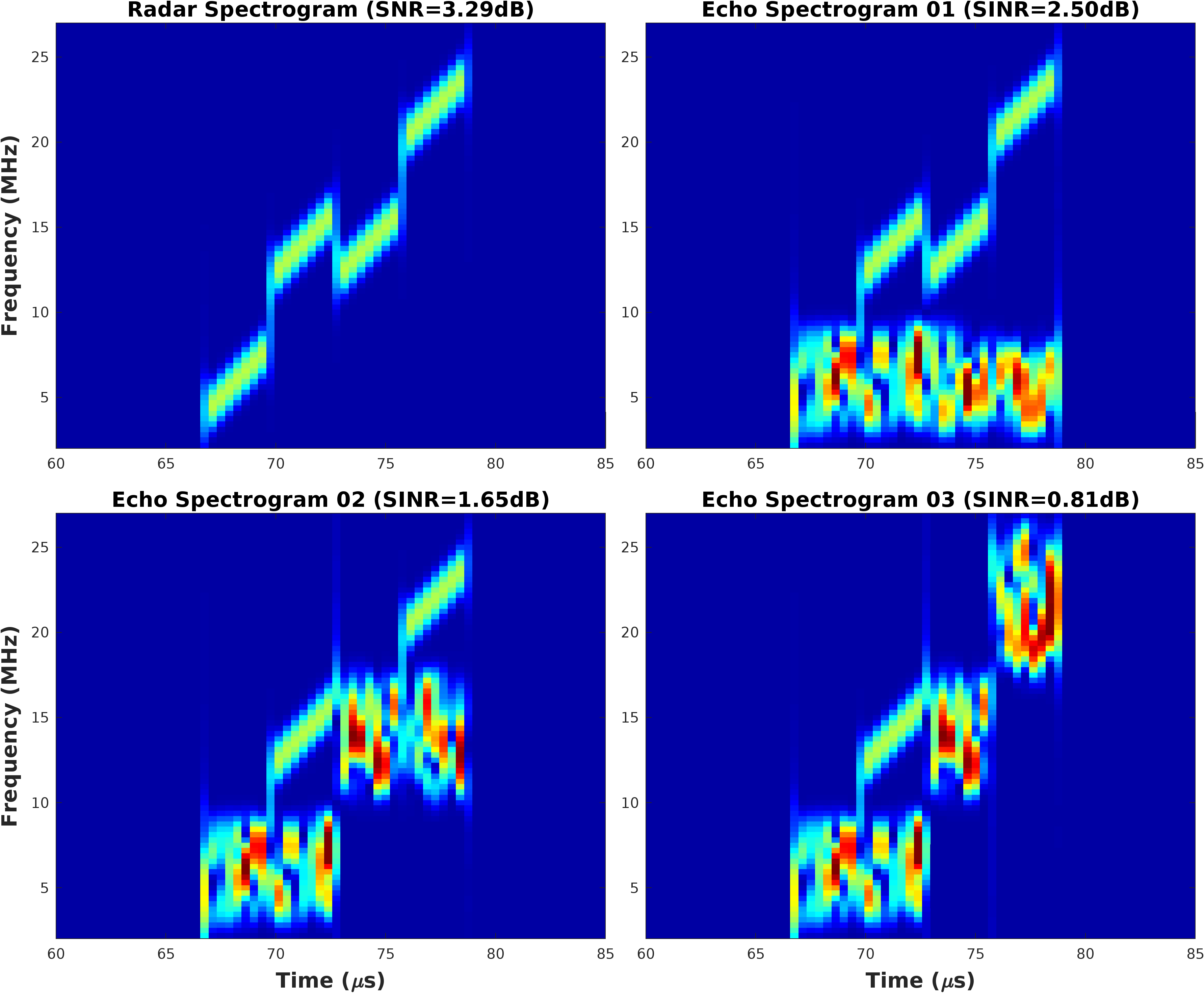}
    \caption{Simulated single-round spectrograms of radar and three different jamming signals.}
    \label{fig:4spectrum}
\end{figure}

\noindent\textbf{Performance Metric:} To account for the stochastic nature of the algorithms, $1000$ independent trials are conducted. The static regret in \eqref{equ:sta-regret}, universal regret in \eqref{equ:u-regret} and SINR in \eqref{equ:sinr} averaged over trials are utilized as performance metrics. Specifically, from Fig.~\ref{fig:exp-sta01} to Fig.~\ref{fig:sinr}, each solid line represents the mean value across all $1000$ trials, and the corresponding shaded regions indicate the $95\%$ confidence intervals.

\noindent\textbf{Baselines:} 
We benchmark the proposed algorithms against two established baseline methods. The first is a classical OCO algorithm, Exp3 (renamed as OMD-IWE in our study), which does not take into account the structure information of the opponent. The second is a widely-used reinforcement learning algorithm, the Deep Q-Network (DQN). This method has been extensively employed in various existing studies on anti-jamming strategy learning~\cite{liRadarActiveAntagonism2021, liDeepQNetworkBased2019, zhengAirborneRadarAntiJamming2022}.

\begin{table}[t]
\centering
\begin{tabular}{|c|c|c|}
\hline
Parameter          &Notation   & Value      \\ \hline
Radar antenna gain        & $G_R$        & 30dB   \\
Radar transmitted power      & $P_R$             & 10KW       \\
Jammer transmitted power     & $P_J$              & 1KW        \\
\# of sub-pulses    & $M$     & 4         \\
Available frequencies & $\{f_1,f_2,f_3\}$ & $\{6,14,22\}$MHz  \\
Sub-pulse width      & $\tau$  & 3$\mu$s   \\
Pulse repetition frequency    & PRF   & 1000Hz      \\
Carrier frequency    & $f_c$  & 10GHz     \\
Distance             & $R$   & 100KM      \\ \hline
\end{tabular}
\caption{Parameter settings for FA radar and jammer.}
\label{tab:para}
\end{table}

\subsection{Comparison in a Stationary Environment}\label{exp:stationary}
In this subsection, we analyze a simple jammer that targets specific predefined frequencies. Under this scenario, the jamming strategy $\y_n$ remains constant across all rounds, characterizing what we refer to as a stationary environment. Specifically, $\y_n$ is defined as follows:
\begin{equation*}
\y_n(b)=
\begin{cases}
0.2, & b= [f_1, f_1, f_2, f_2] \\
0.4, & b = [f_1, f_1, f_1, f_1] \\
0.4, & b = [f_2, f_2, f_2, f_2] \\
0, &\text{Otherwise}
\end{cases}
\end{equation*}

Given the stationary nature of the environment, static regret and universal regret coincide. Fig.~\ref{fig:exp-sta01} illustrates the performance of various methods in terms of average static (universal) regret, with both axes plotted on a logarithmic scale. A faster decline in average regret indicates more effective convergence to the optimal strategy. By the $10^4$-th iteration, the proposed OMD-OME method achieves the lowest regret, reducing it below $10^{-3}$. The OMD-AME method attains a higher regret level of approximately $10^{-1}$. In contrast, baselines methods such as OMD-IWE (Exp3) and DQN incur significantly higher regret, approaching values around 
0.7. Furthermore, the DQN method demonstrates a notably slower reduction in regret, even after $10^6$ iterations, primarily due to its limited sample efficiency. The superior performance of OMD-AME and OMD-AME can be attributed to their efficient exploration of the action space and rapid adaptation to the stationary jammer strategy.
\begin{figure}
    \centering
    \includegraphics[width=0.8\linewidth]{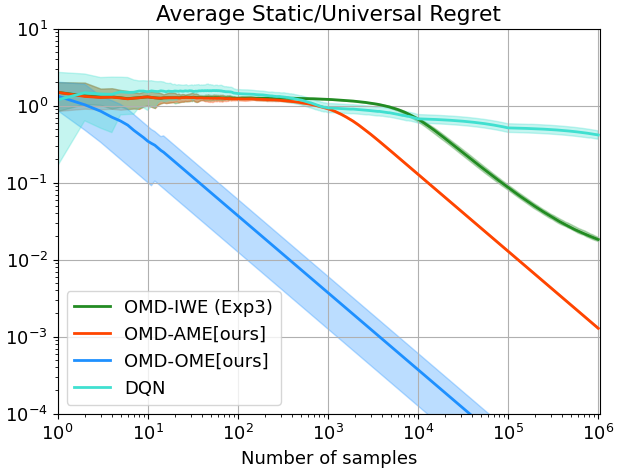}
    \caption{Regret comparison in a stationary environment.}
    \label{fig:exp-sta01}
\end{figure}

\subsection{Comparison in Non-stationary Environments}\label{exp:non-stationary}
In this section, we examine the performance of the proposed algorithms under more complex scenarios, where the jammer's strategy $\y_n$ changes over rounds and we refer to this scenario as non-stationary jamming environment. Specifically, $\y_n$ is generated via a pre-defined decision rule $\pi(\cdot)$ based on a fixed-length history (c.f. Section~\ref{subsec:opponent}). Two types of rules are considered: $\pi(\cdot)$ is binary which implies a deterministic rule and $\pi(\cdot)$ is a non-binary probability simplex that corresponds to a stochastic rule.

\noindent\textbf{Deterministic $\pi(\cdot)$}: We evaluate a smeared spectrum jammer (SMSP)\cite{sun2009suppression}, which selects frequencies based on specific positions in the radar's recent length-$2$ history. Specifically,
\begin{equation*}
\pi_n(b\mid h)=
\begin{cases}
1 & b=b_{\text{chosen}} \\
0 & b\neq b_{\text{chosen}}
\end{cases}
\end{equation*}
For instance, $b_{\text{chosen}}$ integrates frequencies from the first and third positions of the radar's two most recent pulses. If the recent two pulses are $[f_1,f_1,f_2,f_2]$ and $[f_3,f_3,f_3,f_1]$, then $b_{\text{chosen}}=[f_1,f_2,f_3,f_3]$.

Fig.~\ref{fig:sta-det2} demonstrates that all algorithms achieve sub-linear average static regret, consistent with findings in stationary environments. Notably, the proposed OMD-OME and OMD-AME algorithms exhibit faster regret reduction compared to the baseline methods. OMD-OME, in particular, shows a dramatic decline in average static regret around $600$ rounds, indicating substantial convergence improvements over the baselines. Thus, OMD-OME closely matches or even surpasses the optimal static strategy in cumulative cost under the non-stationary environment, highlighting its remarkable adaptability. Additionally, the baseline DQN can also outperform the optimal static strategy given sufficient samples (e.g., $10^5$ rounds). For the more challenging metric of average universal regret (Fig.~\ref{fig:uni-det2}), only OMD-OME and DQN achieve sub-linear regret trajectories. OMD-OME reduces universal regret rapidly, requiring only thousands of rounds to achieve an average universal regret of $10^{-1}$, whereas DQN decreases much more slowly, reaching an average universal regret of approximately $0.8$ after even $10^6$ rounds.
\begin{figure}[htbp]
  \centering
  \begin{subfigure}[b]{0.49\textwidth}
    \centering
    \includegraphics[width=0.8\textwidth]{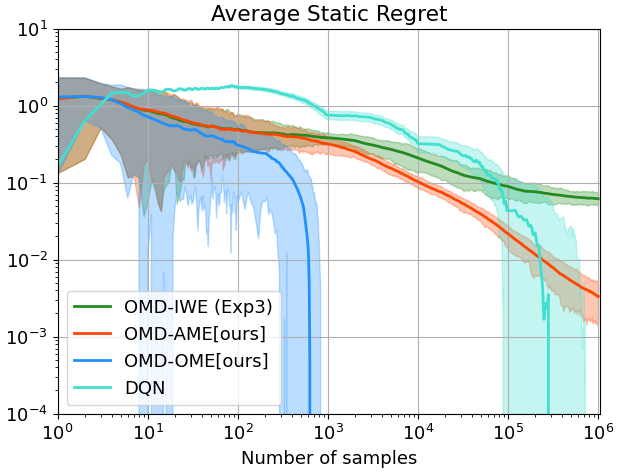}
    \caption{Static regret}
    \label{fig:sta-det2}
  \end{subfigure}
  \begin{subfigure}[b]{0.49\textwidth}
    \centering
    \includegraphics[width=0.8\textwidth]{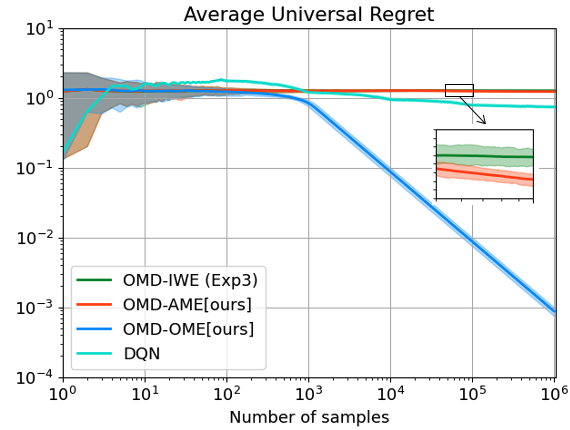}
    \caption{Universal regret}
    \label{fig:uni-det2}
  \end{subfigure}
\caption{Regret comparison in a non-stationary environment.}
\label{fig:non-stat01}
\end{figure}

\noindent\textbf{Stochastic $\pi(\cdot)$}: We consider a smart noise jammer (SNJ)\cite{zhang2022radar} that jams the two most commonly occurring frequencies in the recent length-$2$ history, denoted as $f_{m1}$ and $f_{m2}$. The jammer's rule $\pi(\cdot)$ is given by:
\begin{equation*}
\pi_n(b\mid h)=
\begin{cases}
0.7 & b=[f_{m1},f_{m1},f_{m1},f_{m1}] \\
0.3 & b=[f_{m2},f_{m2},f_{m2},f_{m2}] \\
0 & \text{Otherwise}
\end{cases}
\end{equation*}

Fig.~\ref{fig:freq2} demonstrates that the proposed OMD-OME and OMD-AME algorithms maintain strong static regret performance. For average universal regret, only OMD-OME achieves sub-linear performance within $10000$ samples. Although the RL baseline DQN also exhibits a decreasing trend in universal regret, it requires significantly more samples than OMD-OME to reach sub-linear universal regret. Further details are discussed in the subsequent section on sample efficiency.
\begin{figure}[htbp]
  \centering
  \begin{subfigure}[b]{0.49\textwidth}
    \centering
    \includegraphics[width=0.8\textwidth]{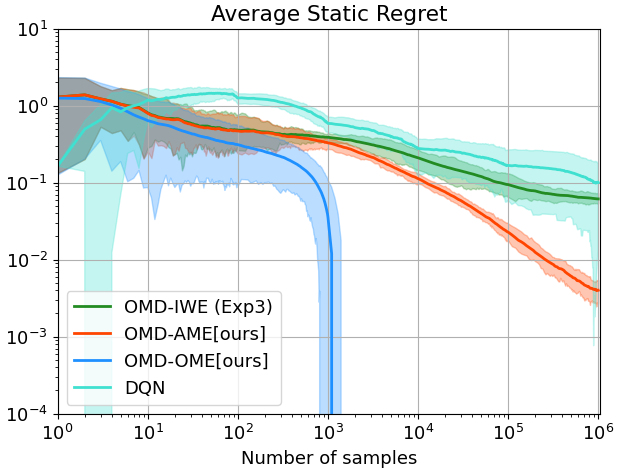}
    \caption{Static regret}
    \label{fig:sta-freq2}
  \end{subfigure}
  \begin{subfigure}[b]{0.49\textwidth}
    \centering
    \includegraphics[width=0.8\textwidth]{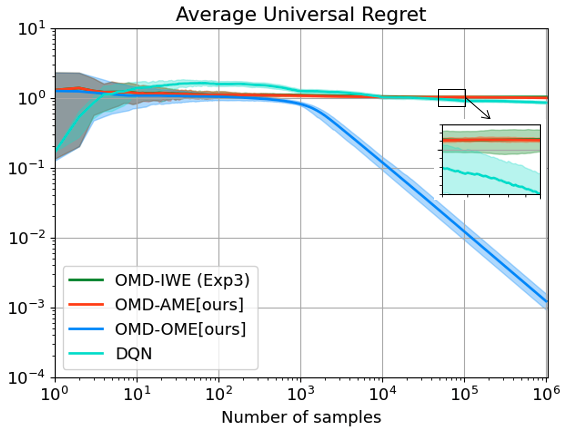}
    \caption{Universal regret}
    \label{fig:uni-freq2}
  \end{subfigure}
\caption{Comparison in non-stationary environment}
\label{fig:freq2}
\end{figure}

\subsection{Comparison on calculated SINR}
In this section, the SINR value (c.f. E.q.~\ref{equ:sinr}) after pulse compression is computed for each round, which provides a direct comparison of the anti-jamming performance across different algorithms. To ensure a comprehensive comparison, both the stationary and non-stationary environments are considered, set identically to those described in Sections~\ref{exp:stationary} and \ref{exp:non-stationary}.

Fig.~\ref{fig:sinr-sta0} indicates that all methods are capable of achieving optimal SINR in the stationary environment. Specifically, the proposed OMD-OME and OMD-AME algorithms demonstrate a rapid increase in SINR, reaching an average maximum value of $3.35$ dB within only around $10$ and $3000$ samples, respectively. This rapid performance underscores their efficiency in achieving high SINR with minimal sample usage. In contrast, the baseline OMD-IWE fails to achieve the maximum value within $10^4$ samples, and the baseline DQN is not stable. For the non-stationary environment compared in Fig.~\ref{fig:sinr-det2}, OMD-OME is the only algorithm that consistently reaches the average maximum SINR within near $1200$ samples. This highlights its superior performance and the benefits obtained by leveraging the characteristics of the jammer's information. In contrast, the baseline algorithms cannot achieve a stable optimal SINR value within $10^4$ rounds.
\begin{figure}[htbp]
  \centering
  \begin{subfigure}[b]{0.49\textwidth}
    \centering
    \includegraphics[width=0.8\textwidth]{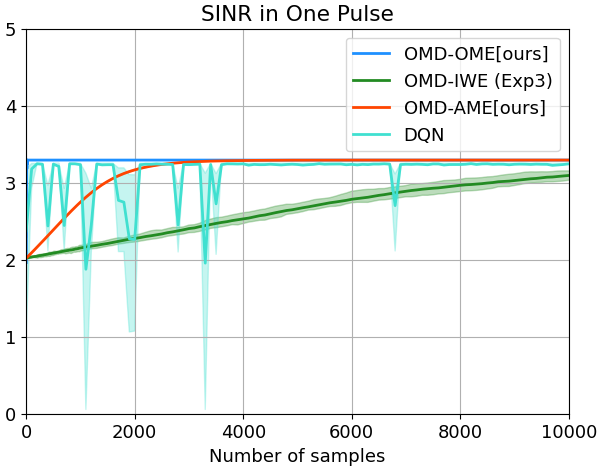}
    \caption{Stationary environment}
    \label{fig:sinr-sta0}
  \end{subfigure}
  \begin{subfigure}[b]{0.49\textwidth}
    \centering
    \includegraphics[width=0.8\textwidth]{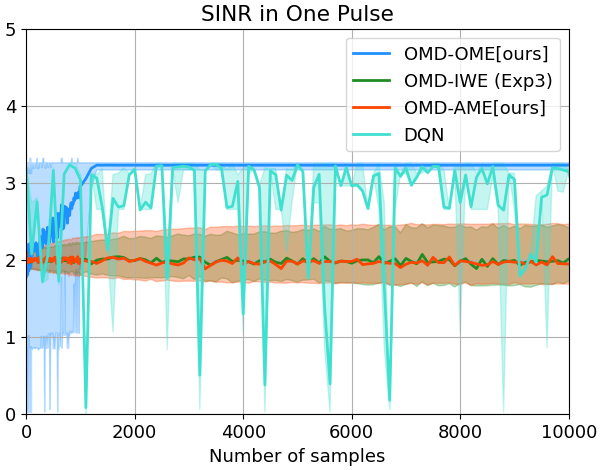}
    \caption{Non-stationary environment}
    \label{fig:sinr-det2}
  \end{subfigure}
\caption{SINR Comparison}
\label{fig:sinr}
\end{figure}

\begin{table*}
\centering
\begin{tabular}{|c|c|c|c|c|}
\hline
\diagbox{\textbf{Jamming environment}}{\textbf{Methods}}                                       & OME-IWE (Exp3) & DQN           & OMD-AME    & OMD-OME       \\ \hline
Stationary - Fixed $\y_n $                          & $9\times10^3$ & $4\times10^4$ & $1.5\times 10^3$      & $8$          \\ \hline
Non-stationary - Deterministic $\pi(\cdot)$ & \slashbox{}{}    & $2\times10^5$ & \slashbox{}{} & $1.2\times10^3$         \\ \hline
Non-stationary - Stochastic $\pi(\cdot)$    & \slashbox{}{}    & $1.2\times10^6$ & \slashbox{}{} & $2.2\times10^3$ \\ \hline
\end{tabular}
\caption{Approximated number of samples required for consistently achieving $3$ dB SINR.}
\label{tab:sample}
\end{table*}

\subsection{Sample Efficiency}
To provide a clear comparison of sample efficiency, Table~\ref{tab:sample} displays the approximated number of samples required to consistently achieve a SINR value of $3$ dB, averaged over $1000$ independent trails. Notably, the proposed OMD-OME and OMD-AME methods demonstrate significant efficiency, requiring only $8$ and $1500$ samples in the stationary environment, respectively. In contrast, the baseline algorithms OMD-IWE and DQN require substantially more samples, approximately $9,000$ and $4 \times 10^4$, respectively. In the non-stationary environment, only the OMD-OME and the RL baseline DQN eventually manage to reach the high SINR value of $3$ dB. Among these, the proposed OMD-OME method stands out, requiring only $2200$ samples in the non-stationary environment with a stochastic $\pi$, while the baseline DQN requires millions of samples. This highlights the substantial improvement in sample efficiency achieved by the proposed algorithms.

%% file: sections/appendix.tex
\subsection{Post-processing Details}\label{appdix:postprocessing}
For the $n$-th pulse with sub-frequencies $f^\mr_m, \forall m\in[M]$, the received signal undergoes bandpass filtering at each frequency:
\begin{equation*}
    Y^{\text{bp}}_n(e^{j\omega}) = R_n(e^{j\omega})H_n(e^{j\omega}), \omega=2\pi f^\mathrm{R}_m/f_s,m\in[M]
\end{equation*}
where $R_n(e^{j\omega})$ and $H_n(e^{j\omega})$ represent the Fourier transform of $r_n(t)$ and the filter's frequency response, respectively. Subsequently, matched filtering is applied to detect the jamming signal  for each corresponding $Y^{\text{bp}}n(e^{j\omega})$ components:
\begin{equation*}
\small
    Y^{\text{mf}}_n(e^{j\omega}) = Y^{\text{bp}}_n(e^{j\omega})S^{\text{bp},*}_n(e^{j\omega}), \omega=2\pi f^\mathrm{R}_m/f_s,m\in[M]
\end{equation*}
where $S^{\text{bp}}_n(e^{j\omega})$ is the impulse response after bandpass filtering. The environment parameter vector $\theta=[P_\mr, P_\mj, P_{n_0}]$ is then estimated from $Y^{\text{mf}}_n(e^{j\omega})$ through post-processing.

\subsection{Proof for Lemma~\ref{lem:grad-est}}\label{proof:lem-grad}
\begin{proof}
Define the gradient estimator as $\hatl_n=\rl_n+\Delta\rl_n$. The static regret defined in E.q.\eqref{equ:sta-regret} over $N$ rounds for algorithm $\mathcal{A}$ can be expressed as
\begin{equation*}
\begin{aligned}
    \text{S-Regret}_N(\mathcal{A}) 
    =& \mathbb{E}\left[ \sum_{n=1}^N \langle\rl_n,\x_n\rangle - \sum_{n=1}^N \langle\rl_n,\x^*\rangle \right] \\
    =& \mathbb{E}\left[ \sum_{n=1}^N \langle\hatl_n,\x_n\rangle - \sum_{n=1}^N \langle\hatl_n,\x^*\rangle \right] \\
    &-\underbrace{ \mathbb{E}\left[ \sum_{n=1}^N \langle\Delta\rl_n,\x_n-\x^*\rangle \right]}_{\textbf{error term}\ \epsilon_N}
\end{aligned}
\end{equation*}
If $\hatl_n$ is an unbiased estimator, which means $\mathbb{E}[\Delta\rl_n]=0,\forall n$, the error term vanishes as $\epsilon_N=0$. Therefore, the static regret can be entirely formulated by estimators $\hatl_n$.

\end{proof}

\subsection{Proof of Theorem~\ref{thm:omd-ame}}\label{proof:thm1}
\begin{proof}
\textbf{(Applying gradient estimator)} The static regret, as defined in E.q.\eqref{equ:sta-regret}, can be expressed using the proposed gradient estimator $\hatl_n=\U(\hatt)\boldsymbol{b}_n$ through the following steps:
\begin{equation*}
\begin{aligned}
    \text{S-Regret}_N(\mathcal{A})
    &= \mathbb{E}\left[ \sum_{n=1}^N \langle\rl_n,\x_n\rangle - \sum_{n=1}^N \langle\rl_n,\x^*\rangle \right] \\
    &=\mathbb{E}\left[ \sum_{n=1}^N \langle\hatl_n,\x_n\rangle - \sum_{n=1}^N \langle\hatl_n,\x^*\rangle \right] \\
    &= \mathbb{E}\left[ \sum_{n=1}^N \langle\hatl_n,\x_n\rangle - \sum_{n=1}^N \hatl_n(a^*) \right]
\end{aligned}
\end{equation*}
Here, $\rl_n=\U(\theta)\y_n$ is the gradient with respect to $\x_n$. Lemma~\ref{lem:grad-est} ensures the validity of the second equality in the derivation. $\x^*$, as defined in E.q.~\eqref{equ:x^*}, is a one-hot vector, and $a^*\in\ar$ represents the corresponding optimal action.

\begin{lem}\label{lem:hedge}
Let $\x_n$ be generated by OMD-AME with a constant stepsize $\eta>0$ and $\hatl_n$ be a gradient estimator, then 
\begin{equation*}\label{equ:inequality-hedge}
    \sum_{n=1}^N \langle\hatl_n,\x_n\rangle \leq \sum_{n=1}^N \hatl_n(a^*) + \eta\sum_{n=1}^N\langle\hatl_n^{\,2},\x_n\rangle+\frac{\log{|\ar|}}{\eta},
\end{equation*}
where $\hatl^2_n$ denotes the element-wise squaring of the vector $\hatl_n$, i.e., $\hatl^2_n(a)=\hatl_n(a)^2,\forall a\in\ar$.
\end{lem}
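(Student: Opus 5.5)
We prove Lemma~\ref{lem:hedge} with the standard potential-function argument for exponential weights (Hedge), using the normalizers $W_n$ as a telescoping potential. We take $\x_1$ uniform, so that $\x_n(a)=\exp(-\eta\sum_{s<n}\hatl_s(a))/W_n$ with $W_n=\sum_{a\in\ar}\exp(-\eta\sum_{s<n}\hatl_s(a))$ and $W_1=|\ar|$. This closed form follows by unrolling the multiplicative update and the normalization of Algorithm~\ref{alg:omd-ame}. The normalization constants cancel, since each projection only rescales by a scalar.

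\textbf{Step 1 (one-step ratio).} We compute
\begin{equation*}
\frac{W_{n+1}}{W_n}=\sum_{a}\x_n(a)e^{-\eta\hatl_n(a)}.
\end{equation*}
Because the AME estimator has nonnegative entries ($\U[a,b\mid\hatt]\in[0,1]$, or at least $\hatl_n\ge 0$ under the model), we have $\eta\hatl_n(a)\ge0$. The elementary inequality $e^{-x}\le 1-x+x^2$ for $x\ge 0$ then gives
\begin{equation*}
\frac{W_{n+1}}{W_n}\le 1-\eta\langle\hatl_n,\x_n\rangle+\eta^2\langle\hatl_n^{\,2},\x_n\rangle.
\end{equation*}
Applying $\log(1+y)\le y$, we obtain $\log(W_{n+1}/W_n)\le -\eta\langle\hatl_n,\x_n\rangle+\eta^2\langle\hatl_n^{\,2},\x_n\rangle$.

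\textbf{Step 2 (telescoping and lower bound).} Summing over $n=1,\dots,N$ gives
\begin{equation*}
\log W_{N+1}-\log|\ar|\le -\eta\sum_n\langle\hatl_n,\x_n\rangle+\eta^2\sum_n\langle\hatl_n^{\,2},\x_n\rangle.
\end{equation*}
For the lower bound, keep only the $a^*$ term in $W_{N+1}$. This gives $\log W_{N+1}\ge -\eta\sum_n\hatl_n(a^*)$. Combining the two bounds, rearranging, and dividing by $\eta$ yields exactly the claimed inequality. Note that $\x_{N}$ only requires updates up to round $N-1$, but the extra update at round $N$ is harmless, as it is used only for accounting.

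\textbf{Main obstacle.} The delicate point is the nonnegativity needed for $e^{-x}\le1-x+x^2$. This inequality holds for all $x\ge -1$, so it suffices that $\eta\hatl_n(a)\ge -1$. With the estimated $\U(\hatt)$ the entries might not lie exactly in $[0,1]$. I would first argue that the defining formula, being $(c-\min\{\cdot,c\})/c$ with a nonnegative SINR, keeps entries in $[0,1]$ for any nonnegative power estimates. Failing that, I would state the lemma under the mild condition $\eta\hatl_n\ge -1$, which holds for small $\eta$ whenever the estimator is bounded below.
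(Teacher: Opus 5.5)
Your proof is correct and follows essentially the same route as the paper: the normalizer $\lVert\bv_n\rVert_1$ (your $W_{n+1}$) serves as the potential, bounded via $e^{-x}\le 1-x+x^2$ and then $1+x\le e^x$, telescoped, and lower-bounded by its $a^*$ term. Your explicit uniform $\x_1$ (needed for the $\log|\ar|$ term), the base case $W_1=|\ar|$, and the check that $\U(\hatt)$ has entries in $[0,1]$ (so $\hatl_n\ge 0$) tidy up points the paper glosses over: the paper's induction starts from $\lVert\bv_1\rVert_1$ but sums from $n=1$, and it never states the sign condition or the initialization.
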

The proof is provided in Appendix~\ref{proof:lem-hedge}. Utilizing Lemma~\ref{lem:hedge}, the static regret can be bounded as follows:
\begin{equation*}\label{equ:static_proof02}
\text{S-Regret}_N(\mathcal{A})\leq \mathbb{E}\left[\eta\sum_{n=1}^N\langle\hatl_n^{\,2},\x_n\rangle+\frac{\log{|\ar|}}{\eta}\right].
\end{equation*}
\noindent\textbf{(Bounding $\mathbb{E}\left[\langle\hatl_n^{\,2},\x_n\rangle\right]$)}
Since $\hatl_n$ only depends on $b_n$, $\x_n$ and $\hatl^2_n$ are independent, and $\mathbb{E}\left[\langle\hatl_n^{\,2},\x_n\rangle\right]$ can be decomposed as
\begin{equation*}
\mathbb{E}\left[\langle\hatl_n^{\,2},\x_n\rangle\right]
=\sum_{a=1}^{|\ar|} \mathbb{E}[\x_n(a)]\mathbb{E}[\hatl^2_n(a)]
\end{equation*}
According to Remark~\ref{rmk:theta}, for each entry in the cost matrix $\U$,
\begin{equation*}
   \mathbb{E}\left[\U[a,b]\right]\leq1, \quad \text{Var}\left[\U[a,b]\right]\leq\sigma^2 
\end{equation*}
Since $\hatl_n(a)=\U[a,b_n], \forall a\in\ar$, we have:
\begin{equation}\label{equ:Usquare}
    \mathbb{E}\left[ \hatl_n(a)^2 \right]=\mathbb{E}\left[\hatl_n(a)\right]^2+\text{Var}\left[\hatl_n(a)\right] 
    \leq\sigma^2+1
\end{equation}
Considering that $\x_n$ is a probability distribution,
\begin{equation*}\label{equ:sta-proof03}
\mathbb{E}\left[\langle\hatl_n^{\,2},\x_n\rangle\right]\leq(\sigma^2+1)\mathbb{E}\left[\sum_{a=1}^{|\ar|}\x_n(a)\right]=\sigma^2+1
\end{equation*}
\noindent\textbf{(Final regret bound)} Consequently, the bound on the static regret is derived as follows:
\begin{equation*}
\begin{aligned}
    \text{S-Regret}_N(\mathcal{A})&\leq\mathbb{E}\left[\eta\sum_{n=1}^N\langle\hatl_n^{\,2},\x_n\rangle+\frac{\log{|\ar|}}{\eta}\right] \\
    &\leq \eta(\sigma^2+1)N +\frac{\log{|\ar|}}{\eta} 
\end{aligned}    
\end{equation*}
Set $\eta=\sqrt{\log{|\mathcal{A}_R|}/(\sigma^2+1)N}$, from which it follows that:
\begin{equation*}
\text{S-Regret}_N(\mathcal{A})\leq 2\sqrt{(\sigma^2+1)N\log{|\mathcal{A}_R|}}
\end{equation*}
\end{proof}

\subsection{Proof for Lemma~\ref{lem:hedge}}\label{proof:lem-hedge}
Recall the update rule for $\x_{n+1}$ described in Algorithm~\ref{alg:omd-ame}, then for any $a\in\ar$,
\begin{equation}\label{equ:xn}
\begin{aligned}
    \x_{n+1}(a) &= \frac{\x_n(a) e^{-\eta\hatl_n(a)}}{\sum_{i}\x_n(i) e^{-\eta\hatl_n(i)}}=\frac{1}{c_n}\x_n(a) e^{-\eta\hatl_n(a)},
\end{aligned}
\end{equation}
where $c_n=\sum_{i}\x_n(i) e^{-\eta\hatl_n(i)}$ represents the normalization constant at round $n$, and $\hatl_n$ is the gradient estimator. Through induction, we have
\begin{equation*}\label{equ:Ln}
\begin{aligned}
    \x_{n+1}(a) &= \left( \frac{1}{c_n}\frac{1}{c_{n-1}}\ldots\frac{1}{c_1}\x_1(a)\right) e^{-\sum_{j=1}^n\eta\hatl_j(a)} \\
    &\overset{\Delta}{=} p_{n} \bv_n(a) ,
\end{aligned}
\end{equation*}
where $\bv_n(a)=e^{-\sum_{j=1}^n\eta\hatl_j(a)}$ represents exponentially cumulative weighted reward for $a$ and $$p_{n}=\left( \frac{1}{c_n}\frac{1}{c_{n-1}}\ldots\frac{1}{c_1}\x_1(a)\right)=\lVert \bv_n \rVert_1$$ 
is a constant factor. Substituting $\x_n(a)=p_{n-1}\bv_{n-1}(a)$ into \eqref{equ:xn}, $\x_{n+1}(a)$ can be re-expressed as
\begin{equation}\label{equ:xnLn}
    \x_{n+1}(a) =\frac{\bv_n(a)}{\lVert \bv_n \rVert_1},\  \forall a\in\ar.
\end{equation}
For round $n$, by definition of $\bv_n(a)$ and E.q.~\eqref{equ:xnLn}, 
\begin{equation*}
\begin{aligned}
    \lVert \bv_{n}\rVert_1 &=\sum_{a\in\ar} e^{-\sum_{j=1}^n\eta\hatl_j(a)}
    = \sum_{a\in\ar}\bv_{n-1}(a)e^{-\eta\hatl_n(a)} \\
&= \lVert \bv_{n-1}\rVert_1\sum_{a\in\ar}\x_n(a)e^{-\eta\hatl_n(a)}
\end{aligned}
\end{equation*}
Then, we further bounded $\lVert \bv_{n}\rVert_1$ as
\begin{equation}\label{eq:vn}
\begin{aligned}
\lVert \bv_{n}\rVert_1&\leq \lVert \bv_{n-1}\rVert_1\sum_{a\in\ar} \x_n(a)\left( 1-\eta\hatl_n(a)+\eta^2\hatl_n(a)^2 \right) \\
&=\lVert \bv_{n-1}\rVert_1\left(1-\eta\langle\hatl_n,\x_n\rangle+\eta^2\langle\hatl_n^{\,2},\x_n\rangle\right) \\
&\leq \lVert \bv_{n-1}\rVert_1e^{-\eta\langle\hatl_n,\x_n\rangle+\eta^2\langle\hatl_n^{\,2},\x_n\rangle}
\end{aligned}
\end{equation}
where the first inequality comes from the Taylor series expansion that $e^{-x}\leq 1-x+x^2$, and the second inequality is from $1+x\leq e^x$. Through induction for E.q.~\eqref{eq:vn}, the relationship between $\bv_N$ and $\x_n$ can be expressed as
\begin{equation}\label{equ:proof-inequ02}
    \lVert \bv_{N}\rVert_1 \leq \lVert \bv_{1}\rVert_1e^{-\sum_{n=1}^N\eta\langle\hatl_n,\x_n\rangle+\sum_{n=1}^N\eta^2\langle\hatl_n^{\,2},\x_n\rangle}.
\end{equation}
Note that for the optimal action $a^*$,
\begin{equation}\label{equ:proof-inequ01}
    \bv_{N}(a^*)=e^{-\sum_{n=1}^N\eta\hatl_n(a^*)}\leq \lVert \bv_{N}\rVert_1.
\end{equation}
Then combine inequalities \eqref{equ:proof-inequ02} and \eqref{equ:proof-inequ01} together, we have
\begin{equation*}
    e^{-\sum_{n=1}^N\eta\hatl_n(a^*)}\leq |\ar|e^{-\sum_{n=1}^N\eta\langle\hatl_n,\x_n\rangle+\sum_{n=1}^N\eta^2\langle\hatl_n^{\,2},\x_n\rangle}
\end{equation*}
where $\lVert \bv_1\rVert_1\leq|\ar|$ is used. By taking the logarithm on both sides and re-organizing the order, we have
\begin{equation*}\label{equ:inequality-hedge}
    \sum_{n=1}^N \langle\hatl_n,\x_n\rangle - \sum_{n=1}^N \hatl_n(a^*)\leq \eta\sum_{n=1}^N\langle\hatl_n^{\,2},\x_n\rangle+\frac{\log{|\ar|}}{\eta},
\end{equation*}
where $\hatl_n^{\,2}$ represents element-wise squaring of the vector $\hatl_n$.


\subsection{Proof for Theorem.\ref{thm:omd-ome}}\label{thm-proof:omd-ome}
\begin{proof}
\textbf{(Grouping regrets)} Due to the fixed history-length property, the history space \(\mH\) can be re-expressed as a finite set of elements, i.e., \(\mH = \{H_1, \ldots, H_{|\mH|}\}\). This allows the universal regret, as defined in \eqref{equ:u-regret}, to be categorized into distinct groups corresponding to identical histories. Notably, for each $H\in\mathcal{H}$, the jammer's strategy $\y=\pi(H)$ is fixed, thereby creating a locally stationary environment. Thus, the entire universal regret can be decomposed as the summation of various static regrets, each conditioned on a distinct $H$. The detailed decomposition is provided in Eq.~\eqref{equ:uregret-decomp}. Furthermore, denote $\x^*_H$ as the optimal strategy associated with $H$, then reorganize E.q.~\eqref{equ:uregret-decomp}, we obtain
\begin{figure*}
\begin{equation}\label{equ:uregret-decomp}
\begin{aligned}
    \text{U-Regret}_N(\mathcal{A}) &= \sum_{n=1}^N \phi(\x_n, \y_n) - \sum_{n=1}^N \phi(\x^*_n, \y_n) \qquad\qquad\qquad\text{( Definition of universal regret )}\\ 
    &= \sum_{n=1}^N \phi\left(\x_n, \pi(h_n)\right) - \sum_{n=1}^N \phi\left(\x^*_n, \pi(h_n)\right) \quad\qquad\text{( History-dependent mapping }\pi\text{ )}\\
    &= \underbrace{\sum_{\substack{n\in[N]:\\h_{n}=H_1}} \phi\left(\x_{n}, \pi(H_1)\right) - \phi\left(\x^*_{H_1}, \pi(H_1)\right)}_{\text{Group with the same history }H_1} +\ldots+ \underbrace{\sum_{{\scriptsize\substack{n\in[N]:\\h_{n}=H_{|\mH|}}}} \phi\left(\x_{n}, \pi(H_{|\mH|})\right) -  \phi\left(\x^*_{H_{|\mH|}}, \pi(H_{|\mH|})\right)}_{\text{Group with the same history }H_{|\mH|}}
\end{aligned}
\end{equation}
\end{figure*}
\begin{equation*}\label{equ:proof-ome01}
\begin{aligned}
\text{\normalsize U-Regret}&_N\text{\normalsize$(\mathcal{A})$}=\sum_{n=1}^N \phi(\x_n, \y_n) - \sum_{n=1}^N \phi(\x^*_n, \y_n) \\
&=\sum_{i=1}^{|\mH|}\sum_{j\in \mathcal{N}_i} \phi\left(\x_{j}, \pi(H_i)\right) - \phi\left(\x^*_{H_i}, \pi(H_i)\right)
\end{aligned}
\end{equation*}
where $\mathcal{N}_i=\{ n\mid h_{n}=H_i,n\in[N]\}$ denotes the set of round indices that share the same history $H_i$, the first summation encompasses over all histories in $\mathcal{H}$, and the second summation aggregates the static regret for each individual history.

\noindent\textbf{(Applying gradient estimator)} Since the proposed opponent-modeling gradient estimator (OME) is also conditioned on the history $H\in\mathcal{H}$, it can be applied to each sub-regret term after the decomposition. Following a similar analysis to that of OMD-AME, we can directly obtain
\begin{equation}\label{equ:proof-uregret}
\begin{aligned}
\text{\normalsize U-Regret}_N\text{\normalsize$(\mathcal{A})$} &=\mathbb{E}\left[\sum_{i=1}^{|\mH|}\sum_{j\in\mathcal{N}_i}\langle\hatl_{j},\x_{j}\rangle-\langle\hatl_{j},\x^*_{H_i}\rangle \right] \\
&\leq\mathbb{E}\left[\sum_{i=1}^{|\mH|}\sum_{j\in\mathcal{N}_i}\eta\langle\hatl_{j}^{\,2},\x_{j}\rangle+\frac{\log{|\ar|}}{\eta}\right]
\end{aligned}
\end{equation}
where the expectation accounts for the randomness in the algorithm and history. The equality follows from Lemma~\ref{lem:grad-est}, and the inequality arises from Lemma~\ref{lem:hedge}.

\noindent\textbf{(Bounding Regret)} For each $H_i\in\mathcal{H}$, we need to further bound the expected inner product term $\mathbb{E}\left[\langle\hatl_{j}^{\,2},\x_{j}\rangle\right]$. Using similar techniques as in Appendix~\ref{proof:thm1}, for all $j\in\mathcal{N}_i$,
\begin{equation*}
\begin{aligned}
\mathbb{E}\left[\langle\hatl_{j}^{\,2},\x_{j}\rangle\right] = \mathbb{E}\left[ \sum_{a=1}^{|\ar|} \x_{j}(a)\hatl_{j}(a)^2 \right] =\mathbb{E}\left[ \sum_{a=1}^{|\ar|}\x_{j}(a) \mathbb{E}\left[ \left(\U[a,:]\hat{\y}_{j}\right)^2 \right] \right],
\end{aligned}   
\end{equation*}
where  $\hatl_{j}(a)=\U[a,:]\hat{\y}_{j}$ is used and the second equality holds since $\x_{j}$ and $\hatl_{j}$ are independent. Furthermore,
\begin{equation*}
\begin{aligned}
    \mathbb{E}\left[\left(\U[a,:\mid\hat{\theta}]\hat{\y}_{j}\right)^2\right] &= \mathbb{E}\left[\left(\sum_{b=1}^{|\aj|} \U[a,b\mid\hat{\theta}]\hat{\y}_{j}(b)\right)^2\right] \\
    &\leq \mathbb{E}\left[\|\U[a,:\mid\hat{\theta}]\|_{\infty}^2 \left(\sum_{b=1}^{|\aj|} \hat{\y}_{j}(b)\right)^2\right] \\
    &\leq \mathbb{E}\left[\|\U[a,:\mid\hat{\theta}]\|_{\infty}^2\right] \leq \sigma^2+1
\end{aligned}
\end{equation*}
where the first inequality follows from the non-negativity of both $\U[a,b\mid\hat{\theta}]$ and $\hat{\y}_{j}(b)$, the last inequality is from E.q.~\eqref{equ:Usquare}. Hence, the expected inner product $\mathbb{E}\left[\langle\hatl_{j}^{\,2},\x_{j}\rangle\right]$ is bounded as
\begin{equation}\label{equ:proof-uregret-lnsquare}
\begin{aligned}
    \mathbb{E}\left[\langle\hatl_{j}^{\,2},\x_{j}\rangle\right] &\leq \mathbb{E}\left[ \sum_{a=1}^{|\ar|}\x_{j}(a) \times (\sigma^2+1)\right] =\sigma^2+1.
\end{aligned}
\end{equation}
By substituting E.q.~\eqref{equ:proof-uregret-lnsquare} into E.q.~\ref{equ:proof-uregret}, the universal regret is bounded as
\begin{equation*}
\begin{aligned}
\text{\normalsize U-Regret}_N\text{\normalsize$(\mathcal{A})$}
&\leq \sum_{i=1}^{|\mH|}\sum_{j\in\mathcal{N}_i} \left[ \eta(\sigma^2+1) + \frac{\log{|\ar|}}{\eta} \right] \\
&=\sum_{i=1}^{|\mH|} \left[ \eta(\sigma^2+1)|\mathcal{N}_i| + \frac{\log{|\ar|}}{\eta} \right] \\
&= \eta(\sigma^2+1)N + \frac{|\mH|\log{|\ar|}}{\eta} \\
&\leq 2\sqrt{(\sigma^2+1)N|\mH|\log{|\mathcal{A}_R|}}
\end{aligned}
\end{equation*}
where the second equality follows from the identity $\sum_{i=1}^{|\mH|} |\mathcal{N}_i| = N$, and the last inequality is obtained by selecting the learning rate as $\eta=\sqrt{|\mH|\log{|\ar|}/(\sigma^2+1)N}$.
\end{proof}